\newtheorem{theorem}{Theorem}
\newtheorem{proposition}{Proposition}
\newtheorem{corollary}{Corollary}
\newtheorem{lemma}{Lemma}
\newcommand{\np}{\mathsf{NP}}
\newcommand{\wone}{\mathsf{W[1]}}
\newcommand{\wtwo}{\mathsf{W[2]}}
\newcommand{\wpe}{\mathsf{W[P]}}
\newcommand{\xp}{\mathsf{XP}}
\newcommand{\fpt}{\mathsf{FPT}}
\newcommand{\p}{\mathsf{P}}
\newcommand{\bigo}{\mathcal{O}}
\newcommand{\bigos}{\mathcal{O}^*}
\newcommand{\apx}{\mathsf{APX}}
\newcommand{\I}{\mathcal{I}}
\newcommand{\M}{\mathcal{M}}
\renewcommand{\S}{\mathcal{S}}
\newcommand{\T}{\mathcal{T}}
\def\GM{\textsc{Graph Motif}\xspace}
\def \pmogm {\textsc{Module Graph Motif}\xspace}
\def \pegm {\textsc{Exact Graph Motif}\xspace}
\def \plmogm {\textsc{List-Colored Module Graph Motif}\xspace}
\def \pmagm {\textsc{Max Graph Motif}\xspace}
\def \pmisgm {\textsc{Min Substitute Graph Motif}\xspace}
\def \pmisc {\textsc{Min Set Cover}\xspace}
\def \pxtc {\textsc{X3C}\xspace}
\def \pmais {\textsc{Max Independent Set}\xspace}
\DeclareMathOperator{\adj}{adj}
\DeclareMathOperator{\child}{Child}
\tikzstyle{bigvertex}=[circle,fill=black!0,minimum size=15pt,inner sep=0pt]
\tikzstyle{rect}=[rectangle, rounded corners]
\tikzstyle{edge} = [draw,-,rounded corners=8pt]
\tikzstyle{vertex}=[circle, draw, inner sep=2pt, minimum width=4pt]
\newcommand{\PbOpt}[3]{%
%\index[pb]{#1}
\begin{center}
  \begin{tabular}{|l|}%
  \hline
    %\shadowbox{%
    \begin{minipage}[c]{.95\textwidth}
    \smallskip%
      \par\noindent%
      %\shadowbox{#1}%
      #1:
      %\medskip%
      \par\noindent%
      $\bullet$
      \textbf{\textsf{Input}}: #2% 
      %\medskip
      \par\noindent%
      $\bullet$
      \textbf{\textsf{Output}}: #3%  
      %\medskip
      %\par\noindent%
      %$\bullet$
      %\textbf{\textsf{Measure}}: #4% 
      \smallskip%
      \par\noindent%
    \end{minipage}
   % }% end shadowbox
    \\\hline
  \end{tabular}%
\end{center}
}%
\newcommand{\Pb}[4]{%
%index probleme
%\index[pb]{#1}
\begin{center}
  \begin{tabular}{|l|}%
  \hline
    %\shadowbox{%
    \begin{minipage}[c]{.95\textwidth}
      \smallskip%
      \par\noindent%
      %\shadowbox{#1}%
      #1:
      %\medskip%
      \par\noindent%
      $\bullet$
      \textbf{\textsf{Input}}: #2% 
      %\medskip
      \par\noindent%
      $\bullet$
      \textbf{\textsf{#4}}:
      #3% 
      %\smallskip%
      \par\noindent%
    \end{minipage}
 % }% end shadowbox
  \\\hline
  \end{tabular}%
\end{center}
}%
\newcommand{\PbDec}[3]{
	\Pb{#1}{#2}{#3}{Question}
}
\newlength{\parindentsave}\setlength{\parindentsave}{\parindent}
\begin{document}

%opening
\title{Some results on more flexible versions of \GM\footnote{An extended abstract of this paper appeared in the proceedings of the 7th international Computer Science Symposium in Russia (CSR 2012), LNCS 7353, pp. 278-289.}}

\author{%
  Romeo Rizzi\footnote{Department of Computer Science, University of Verona - Verona, Italy. \texttt{Romeo.Rizzi@univr.it}} \and 
  Florian Sikora\footnote{LAMSADE - CNRS UMR 7243, PSL, Universit\'e Paris-Dauphine (France). \texttt{florian.sikora@dauphine.fr}}  
 }

%\institute{
%Romeo Rizzi \at Department of Computer Science, University of Verona - Verona, Italy.\\
%\email{Romeo.Rizzi@univr.it} 
%\and
% Florian Sikora \at PSL Universit\'e Paris-Dauphine, LAMSADE, UMR CNRS 7243 - Paris, France.\\
%\email{florian.sikora@dauphine.fr} 
%}

\date{}

%\date{Received: date / Accepted: date}

\maketitle

\begin{abstract}
The problems studied in this paper originate from \textsc{Graph Motif}, a problem introduced in 2006 in the context of biological networks. Informally speaking, it consists in deciding if a multiset of colors occurs in a connected subgraph of a vertex-colored graph. Due to the high rate of noise in the biological data, more flexible definitions of the problem have been outlined. We present in this paper two inapproximability results for two different optimization variants of \textsc{Graph Motif}: one where the size of the solution is maximized, the other when the number of substitutions of colors to obtain the motif from the solution is minimized. We also study a decision version of \textsc{Graph Motif} where the connectivity constraint is replaced by the well known notion of graph modularity. While the problem remains $\np$-complete, it allows algorithms in $\fpt$ for biologically relevant parameterizations.

%\keywords{Computational Biology \and Computational Complexity \and Biological Networks \and Graph Motif Problem}
\end{abstract}

%check en gram monge

\section{Introduction}

A recent field in bioinformatics focuses in biological networks, which represent interactions between different elements (\textit{e.g.} between amino acids, between molecules or between organisms)~\cite{Alm2003}. Such a network can be modeled by a vertex-colored graph, where nodes represent elements, edges represent interactions between them and colors give functional informations on the graph nodes. 
%The biological background of this paper is done over protein-protein interactions networks and metabolic networks. 
Using biological networks allows a better characterization of species, by determining small recurring subnetworks, often called \textit{motifs}. Such motifs can correspond to a set of nodes realizing a same function, which may have been evolutionary preserved~\cite{Sharan2006}. It is thus crucial to determine these motifs to identify common elements between species and transfer the biological knowledge.

Historically, motifs were defined by a set of nodes labels with the addition of a given topology (\textit{e.g.} a path, a tree, a graph). The corresponding algorithmic problem consist to find an occurrence of the motif in the network which respect both the label set and the given topology. This leads to problems roughly equivalent to subgraph isomorphism, a computationally difficult problem. However, in metabolic networks, similar topology can represent very different functions~\cite{Lacroix2006}. Moreover, in protein-protein interactions (PPI) networks, informations about the topology of motifs is often missing~\cite{Bruckner2009}. There is also a high rate of false positive and false negative in such networks~\cite{Edwards2002}. Therefore, in some situations, topology is irrelevant, which leads to search for \emph{functional} motifs instead of \emph{topological} ones. In this setting, we still ask for the conservation of the node labels, but we replace topology conservation by the weaker requirement that the subnetwork should form a connected subgraph of the target graph. This approach was proposed by Lacroix \textit{et al.}, defining \pegm~\cite{Lacroix2006}.

\PbDec{\pegm}{A graph $G = (V,E)$, a set of colors $C$, a function $col : V \rightarrow C$, a multiset $M$ over $C$, an integer $k$.}{Does there exist a subset $V' \subseteq V$ such that (i) $|V'| = k$, (ii) $G[V']$ is connected, and (iii) $col(V')=M$.}

%define colors etc ?

In the following, the motif is said colorful if $M$ is a set (it is a multiset otherwise). Note that this problem also has application in the context of mass spectrometry~\cite{Bocker2009}, and may be used in social or technical networks~\cite{Betzler2008,Sikora2011}.

\paragraph{Preliminaries}

A parameterized problem $(\I,k)$ is said \textit{fixed-parameter tractable} (or in the class $\fpt$) with respect to a parameter $k$ if it can be solved with an exact algorithm with time complexity $f(k)\cdot|\I|^c$, where $f$ is any computable function and $c$ is a constant (one can see \cite{Downey1999,Niedermeier2006}). Such algorithms are useful for $\np$-hard decision problems, thus $f$ will be exponential (but an important goal is to determine $f$ as small as possible). Parameterized algorithms do not necessarily exist, it is possible to prove that a problem do not have a parameterized algorithm via $fpt$-reductions. The parameterized complexity hierarchy is composed of the classes $\fpt \subseteq \wone \subseteq \wtwo \subseteq \dots \subseteq \wpe \subseteq \xp$. The class $\xp$ contains problems solvable in $f(k)\cdot|\I|^{g(k)}$, where $f$ and $g$ are unrestricted functions. We refer to \cite{Downey1999} for precises definitions of the $\mathsf{W}$-classes. All inclusions are assumed to be proper (but only $\fpt \neq \xp$ is known). Therefore, it is unlikely to find parameterized algorithm for a problem that is hard in $\wone$ under $fpt$-reduction (\textsc{Clique} parameterized by the size of the solution is one of them for example). A \textit{kernel} is an equivalent instance of the input obtained in polynomial time with a size bounded by a function only depending of the parameter $k$~\cite{Niedermeier2006}.

Given an instance $\I$ of an optimization problem, we use $opt(\I)$ to denote the optimum value of $\I$ and
$val(\I,S)$ to denote the value of a feasible solution $S$ of
instance $\I$. The {\em performance ratio\/} of $S$ (or {\it
approximation factor}) is
$r(\I,S)=\max\left\{\frac{val(\I,S)}{opt(\I)},
\frac{opt(\I)}{val(\I,S)}\right\}$. The  {\em error} of $S$,
$\varepsilon(\I,S)$, is defined by $\varepsilon(\I,S)= r(\I,S)-1$.
For a function $f$, an algorithm  is a {\it
polynomial-time $f(n)$-approximation\/}, if for every instance $\I$ of the problem,
it returns in polynomial time a solution $S$ such that $r(\I,S) \leq f(|\I|).$

The notion of an $E$-reduction ({\it error-preserving} reduction)
was introduced in~\cite{Khanna1994} by Khanna \textit{et al.} A problem $\Pi$
is called {\it $E$-reducible} to a problem $\Pi'$, if there exist
polynomial time computable functions $f$, $g$ and a constant
$\beta$ such that:
\begin{itemize}
\item $f$ maps an instance $\I$ of $\Pi$ to an instance $\I'$ of $\Pi'$
such that $opt(\I)$ and $opt(\I')$ are related by a polynomial
factor, \textit{i.e.} there exists a polynomial $p(n)$ such that
$opt(\I')\leq p(|\I|) opt(\I)$, 
\item $g$ maps solutions $S'$ of $\I'$
to solutions $S$ of $\I$ such that $\varepsilon(\I,S)\leq \beta
\varepsilon(\I',S')$.
\end{itemize}

An important property of an $E$-reduction is that it can be applied
uniformly to all levels of approximability; that is, if $\Pi$ is
$E$-reducible to $\Pi'$ and $\Pi'$ belongs to $\cal{C}$ then $\Pi$
belongs to $\cal{C}$ as well, where $\cal{C}$ is a class of
optimization problems with any kind of approximation guarantee
(see also \cite{Khanna1994}).

\paragraph{Previous results}

Not surprisingly, \pegm remains \sloppy $\np$-complete, even under strong restrictions (when $G$ is a bipartite graph with maximum degree 4 and $M$ is built over two colors only~\cite{Fellows2007}, or when $M$ is colorful and $G$ is a rooted tree of depth 2~\cite{Ambalath2010} or a tree of maximum degree 3~\cite{Fellows2007}). However, for general trees and multiset motifs, the problem can be solved in $\bigo(n^{2c+2})$ time, where $c$ is the number of distinct colors in $M$, while being $\wone$-hard for the parameter $c$ \cite{Fellows2007}. We also point out that the problem can be solved in polynomial time if the number of colors in $M$ is bounded and if $G$ is of bounded treewidth~\cite{Fellows2007}. It is also polynomial if $G$ is a caterpillar~\cite{Ambalath2010}, or if the motif is colorful and $G$ is a tree where the colors appears at most twice. This last result is mentioned in~\cite{Dondi2011} and can be retrieved by an easy transformation to a \textsc{2-SAT} instance (chapter~4 of~\cite{Sikora2011}).

The difficulty of this problem is counterbalanced by its fixed-parameter \sloppy tractability when the parameter is $k$, the size of the solution \cite{Lacroix2006,Fellows2007,Betzler2008,Bruckner2009,guillemotfinding,Koutis2012,Bjorklund2012}. The fastest (randomized) parameterized algorithm for \pegm runs in $\bigos(2^k)$ time for both colorful and multiset cases, and uses polynomial space~\cite{Bjorklund2012} (the $\bigos$ notation suppresses polynomial factors). Moreover, this last paper proves that a $O((2- \epsilon)^k)$-time algorithm is unlikely~\cite{Bjorklund2012}. Finally, the problem is unlikely to admit polynomial kernels, even on restricted classes of trees~\cite{Ambalath2010}. 

To deal with the high rate of noise in biological data, different variants of \pegm have been introduced. The approach of Dondi \textit{et al.} requires a solution with a minimum number of connected components~\cite{Dondi2011a}, while the one of Betzler \textit{et al.} asks for a 2-connected solution~\cite{Betzler2008}. As for traditional bioinformatics problems, some colors can be inserted in a solution, or conversely, some colors of the motif can be deleted in a solution~\cite{Bruckner2009,Dondi2011a,guillemotfinding}. Recently, Dondi \textit{et al.} introduced a variant when the number of substitutions between colors of the motif and colors in the solution must be minimum~\cite{Dondi2011}.

\paragraph{Our results}

Following this direction, we consider in Section~\ref{sec:maxmotif} an approximation issue when one wants to maximize the size of the solution. In Section~\ref{ref:subst}, we propose an inapproximability result when one wants to minimize the number of substitutions. Finally, we present in Section~\ref{sec:modules} a new requirement concerning the connectedness of the solution with one hardness result and two parameterized algorithms. % Due to space constraints, most proofs are in missing.%appendix.

\section{Maximizing the solution size}\label{sec:maxmotif}

To deal with the high rate of noise in the biological data, one approach allows some colors of the motif to be deleted from the solution, leading to \pmagm, a problem introduced by Dondi \textit{et al.}~\cite{Dondi2011a}.

%\PbOpt{\pmagm}{A graph $G = (V,E)$, a set of colors $C$, a function $col : V \rightarrow C$, a multiset $M$ over $C$.}{A subset $V' \subseteq V$ such that (i) $G[V']$ is connected, and (ii) $col(V') \subseteq M$.}{The size of $V'$.}
\PbOpt{\pmagm}{A graph $G = (V,E)$, a set of colors $C$, a function $col : V \rightarrow C$, a multiset $M$ over $C$.}{A subset $V' \subseteq V$ such that (i) $G[V']$ is connected, (ii) $col(V') \subseteq M$ and such that $|V'|$ is maximized.} %{The size of $V'$.}

%Il est alors impossible de leur trouver un homologue (non utilisé) dans le réseau (voir Figure~\ref{fig:max motif1}). 
%Ces couleurs sont supprimées de la solution par rapport au motif (Bruckner \textit{et al.} parlent de \textit{délétions} \cite{Bruckner2009}). Un exemple se trouve Figure~\ref{fig:max motif1}. Le problème suivant autorise les délétions.

In a natural decision form, we are also given an integer $k$ in the input and one looks for a solution of size $k$ (the number of deletions is thus equal to $|M|-k$). The problem is known to be in the $\fpt$ class for parameter $k$~\cite{Dondi2011a,Bruckner2009,guillemotfinding}.
%Dondi \textit{et al.} \cite{Dondi2011a} present an algorithm to solve \pmagm using the color-coding technique with an implicit time complexity of $\bigo((32 e^2)^k k |E|)$ and an exponential space complexity. This time complexity is further reduced to $\bigo(3^k|E|)$ by Bruckner \textit{et al.} \cite{Bruckner2009} when the motif is colorful. Finally, Guillemot and Sikora~\cite{Guillemot2010} give two algorithms with polynomial space complexity, and reduce the time complexity to $\bigos(2^k)$ and $\bigos(4^k)$ when the motif is respectively colorful and a multiset by using the multilinear detection technique~\cite{Koutis2009}. 
Concerning its approximation, \pmagm is $\apx$-hard, even when $G$ is a tree of maximum degree 3, the motif is colorful and each color appears at most twice in $G$ (in the same conditions, recall that the \pegm is polynomial~\cite{Dondi2011}). Moreover, there is no constant approximation ratio unless $\p = \np$, even when $G$ is a tree and $M$ is colorful~\cite{Dondi2011a}.

In the following, we answer an open question of Dondi \textit{et al.}~\cite{Dondi2011a} concerning the approximation issue of the problem when $G$ is a tree where each color occurs at most twice. More precisely, we prove that \pmagm cannot be approximated within $|V|^{\frac{1}{3}-\epsilon}, \forall \epsilon>0$, even when $G$ is a tree where each color appears at most twice and $M$ is colorful. 
To do so, we use a reduction from \pmais, a problem stated as follows: 
%Given a graph  $G_I=(V_I,E_I)$, find the maximum subset $V'_I \subseteq V_I$ where there is no two nodes $u,v \in V'_I$ such that $\{u,v\} \in E_I$.
\PbOpt{\pmais}{A graph $G=(V,E)$.}{A subset $V' \subseteq V$ where there is no two nodes $u,v \in V'$ such that $\{u,v\} \in E$, and such that $|V'|$ is maximized.}

Our proof proceeds in four steps. We first describe the construction of the instance $\I'=(G,C,col)$ for \pmagm from the instance $\I=(G_I = (V_I,E_I))$ of \pmais (we consider the motif as $M = C$). We next prove that we can construct in polynomial time a solution for $\I'$ from a solution for $\I$ and, conversely, that we can construct in polynomial time a solution for $\I$ from a solution for $\I'$. Finally, we show that if there is an approximation algorithm with ratio $r$ for \pmagm, then there is an approximation algorithm with ratio $r$ for \pmais.

Before stating the reduction, consider a total order over the edges of $G$. We then define a function $\adj : V_I \to 2^{E_I}$, giving for a node $v \in V_I$, the ordered list of edges where $v$ is involved (thus of size $d(v)$, the degree of $v$). With this order, consider that $\adj(v)[i]$ give the $i$-th edge where $v$ is involved. From the graph  $G_I=(V_I,E_I)$, we build the graph $G=(V,E)$ as follows (see also Figure~\ref{fig:max motif}):

\begin{tabular}{ll}
-- $V =$ & $ \{r\}\ \cup$ $\{v_i^e : 1 \leq i \leq |V_I|, e \in \adj(v_i)\}\ \cup$\\
    & $\{v_i^j : 1 \leq i \leq |V_I|, 1 \leq j \leq |V_I|^2 \},$\\
\end{tabular}

\begin{tabular}{ll}
-- $E =$&$ \{ \{r,v_i^{\adj(v_i)[1]}\} : 1 \leq i \leq |V_I| \}\ \cup$\\
   & $\{\{v_i^{\adj(v_i)[j]},v_i^{\adj(v_i)[j+1]}\} : 1 \leq i \leq |V_I|, 1 \leq j < d(v_i) \}\ \cup$\\
   & $\{\{v_i^{\adj(v_i)[d(v_i)]},v_i^1\} : 1 \leq i \leq |V_I| \}\ \cup$ \\
   & $\{ \{v_i^j,v_i^{j+1} \} : 1 \leq i \leq |V_I|, 1 \leq j < |V_I|^2 \}$.\\
\end{tabular}

\begin{figure}[!ht]
 \begin{center}
\begin{tikzpicture}[scale=.7,transform shape,decoration={brace,amplitude=3pt},minimum size=25pt,inner sep=0pt]
	%IS nodes
	\node[] () at (-11,7.5) {$G_I=(V_I,E_I)$};
    \node[vertex,draw] (IS1) at (-9,9) {$v_1$};
    \node[vertex,draw] (IS2) at (-7,9) {$v_2$};
%	\node[vertex,draw,fill=black!30] (IS3) at (-6,11.5) {$v_3$};
	\node[vertex,draw,very thick] (IS3) at (-8,10) {$v_3$};
    \node[vertex,draw,very thick] (IS4) at (-9,7.5) {$v_4$};
    \node[vertex,draw] (IS5) at (-7,7.5) {$v_5$};
   %IS edges
     \path[draw] (IS1) -- (IS2) -- (IS3) -- (IS1);
     \path[draw] (IS2) -- (IS5) -- (IS4) -- (IS1);
	\begin{scope}[xshift=-9	cm]
    %%%%%%%%%%%%%%%%%%ù
    %Tree nodes
	     \node[] () at (5,7.3) {$G=(V,E)$};
	     \node[vertex,draw,very thick] (Tr) at (10,13.8) {$c_r$};
	     %P1
  	     \node[vertex,draw] (T11) at (7,12.5) { $c_{\{1,2\}}$};
	     \node[vertex,draw] (T12) at (7,11.3) { $c_{\{1,3\}}$};
	     \node[vertex,draw] (T13) at (7,10.1) { $c_{\{1,4\}}$};
	     \path[draw] (Tr) -- (T11) -- (T12) -- (T13);
	     %P2
  	     \node[vertex,draw] (T21) at (8.5,12.5) { $c_{\{1,2\}}$};
	     \node[vertex,draw] (T22) at (8.5,11.3) { $c_{\{2,3\}}$};
	     \node[vertex,draw] (T23) at (8.5,10.1) { $c_{\{2,5\}}$};
	     \path[draw] (Tr) -- (T21) -- (T22) -- (T23);
	     %P3
  	     \node[vertex,draw,very thick] (T31) at (10,12.5) { $c_{\{1,3\}}$};
	     \node[vertex,draw,very thick] (T32) at (10,11.3) { $c_{\{2,3\}}$};
	     \path[draw] (Tr) -- (T31) -- (T32);
	     %P4
  	     \node[vertex,draw,very thick] (T41) at (11.5,12.5) { $c_{\{1,4\}}$};
	     \node[vertex,draw,very thick] (T42) at (11.5,11.3) { $c_{\{4,5\}}$};
	     \path[draw] (Tr) -- (T41) -- (T42);
	     %P4
  	     \node[vertex,draw] (T51) at (13,12.5) { $c_{\{2,5\}}$};
	     \node[vertex,draw] (T52) at (13,11.3) { $c_{\{4,5\}}$};
	     \path[draw] (Tr) -- (T51) -- (T52);
    	%T1
  	     \node[vertex,draw] (T1n1) at (7,8.9) {$c_1^1$};
  	     \node[vertex,draw] (T1n2) at (7,7.4) {$c_1^{25}$};
  	     \path[draw] (T13) -- (T1n1);
	     \path[draw,densely dotted] (T1n1) -- (T1n2);
    	 %T2
    	 \node[vertex,draw] (T2n1) at (8.5,8.9) {$c_2^{1}$};
  	     \node[vertex,draw] (T2n2) at (8.5,7.4) {$c_2^{25}$};
  	     \path[draw] (T23) -- (T2n1);
	     \path[draw,densely dotted] (T2n1) -- (T2n2);
    	 %T3
    	 \node[vertex,draw,very thick] (T3n1) at (10,8.9) {$c_3^{1}$};
  	     \node[vertex,draw,very thick] (T3n2) at (10,7.4) {$c_3^{25}$};
  	     \path[draw] (T32) -- (T3n1);
	     \path[draw,densely dotted] (T3n1) -- (T3n2);
    	 %T4
    	 \node[vertex,draw,very thick] (T4n1) at (11.5,8.9) {$c_4^{1}$};
  	     \node[vertex,draw,very thick] (T4n2) at (11.5,7.4) {$c_4^{25}$};
  	     \path[draw] (T42) -- (T4n1);
	     \path[draw,densely dotted] (T4n1) -- (T4n2);
    	 %T5
    	 \node[vertex,draw] (T5n1) at (13,8.9) {$c_5^{1}$};
  	     \node[vertex,draw] (T5n2) at (13,7.4) {$c_5^{25}$};
  	     \path[draw] 
    	 (T52) -- (T5n1);
	     \path[draw,densely dotted] 
    	 (T5n1) -- (T5n2);
    	 %accolade
    	 \draw[decorate,thick] (6.5,7) -- (6.5,9.3)
			node[midway,anchor=east,inner sep=2pt, outer sep=1pt]  {$5^2$}	;
\end{scope}
\end{tikzpicture}
 \end{center}
 \caption{Construction of $G$ from an instance $G_I$ of \pmais. For ease, only the color of the nodes of $G$ (not the label) are given. From a solution in $G_I$ in bold, the solution for \pmagm is given in bold in $G$.}\label{fig:max motif}
\end{figure}
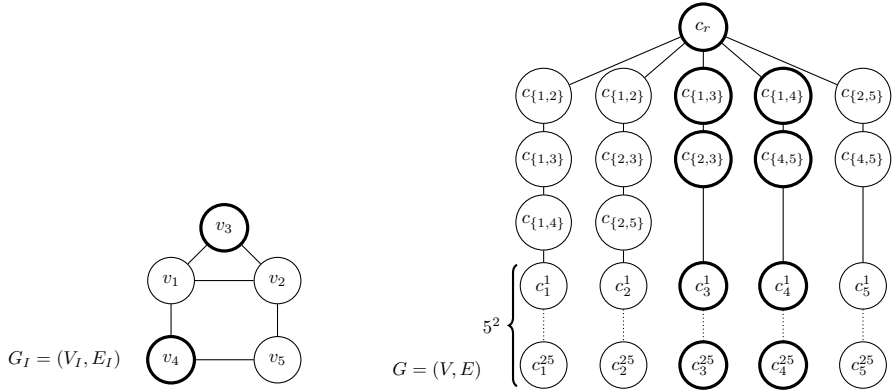

Informally speaking, $r$ is the root of $G$. There are $|V_I|$ paths connected to $r$. Each path represents a node of $G_I$ and is of length $d(v_i) + |V_I|^2$. Observe that $|V| = 1 + 2|E_I| + |V_I|.|V_I|^2$ (there are two nodes involved in each edge, therefore $\sum_{v \in V_I}d(v) = 2|E_I|$).
Let us now describe the set $C$ of colors and the coloration function $col : V \to C$. The set of colors is $C = \{c_r\} \cup \{c_e : e \in E_I\} \cup \{c_i^j : 1 \leq i \leq |V_I|, 1 \leq j \leq |V_I|^2\}$. Considering $M = C$, the motif is colorful. Coloration of the nodes of $G$ is done as follows: $col(r) = c_r$, $\forall e=\{v_i,v_j\} \in E_I, col(v_i^e) = col(v_j^e) = c_e$, $\forall 1 \leq i \leq |V_I|, 1 \leq j \leq |V_I|^2, col(v_i^j) = c_i^j$. In other words, for each edge $e = \{v_i,v_j\}$, a copy of the node $v_i$ and a copy of the node $v_j$ have the same color $c_e$, the one of the edge. Moreover, $r$ and the nodes $v_i^j$ all have different colors. In fact, nodes $\{v_i^j : 1 \leq i \leq |V_I|, 1 \leq j \leq |V_I|^2 \}$ can be considered as ``black boxes'', which are given for free. We clearly observe that by construction, $G$ is a tree where each color appears at most twice. Let us show how to build a solution for $\I'$ from a solution for $\I$. %, and \textit{vice-versa}. 

\begin{lemma}\label{lem:directMM}If there is a solution $V'_I \subseteq V_I$ for $\I$, then there is a solution $V' \subseteq V$ for $\I'$ such that $|V'| \geq |V'_I|.|V_I|^2$.
\end{lemma}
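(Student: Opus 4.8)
The plan is to build $V'$ by taking the root together with the full root-to-leaf path that represents each chosen vertex of the independent set. Concretely, for the independent set $V'_I$ I would set
\[
V' = \{r\} \cup \bigcup_{v_i \in V'_I} \Big( \{v_i^e : e \in \adj(v_i)\} \cup \{v_i^j : 1 \le j \le |V_I|^2\} \Big),
\]
i.e. the whole path $r, v_i^{\adj(v_i)[1]}, \dots, v_i^{\adj(v_i)[d(v_i)]}, v_i^1, \dots, v_i^{|V_I|^2}$ for every $v_i \in V'_I$. Since every one of these paths has $r$ as an endpoint, $G[V']$ is connected, so condition (i) holds immediately and costs nothing.

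The real content is condition (ii), namely that $col(V')$ is a sub-multiset of $M = C$; as $M$ is a set, this is equivalent to $V'$ being colorful (no color used twice). First I would observe that in $G$ the root color $c_r$ and every black-box color $c_i^j$ occur on a unique vertex, so they can never clash; the only colors appearing twice are the edge colors $c_e$, each carried by exactly the two copies $v_i^e$ and $v_j^e$ with $e = \{v_i,v_j\}$. Hence a color collision inside $V'$ would force both $v_i^e$ and $v_j^e$ into $V'$, which by construction happens exactly when both endpoints $v_i$ and $v_j$ of $e$ belong to $V'_I$. This is precisely what the independence of $V'_I$ rules out, so no edge color repeats and $V'$ is colorful.

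Finally I would bound the size from below. Each chosen path contributes its $|V_I|^2$ black-box vertices, all distinct across different paths (they carry colors $c_i^j$ with distinct first index $i$), so
\[
|V'| \ge \sum_{v_i \in V'_I} |V_I|^2 = |V'_I| \cdot |V_I|^2,
\]
which is exactly the claimed inequality. The construction is clearly polynomial.

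I do not expect a genuine obstacle here: the whole argument hinges on the single clean correspondence ``edge of $G_I$ versus pair of equally-colored copies in $G$'', and the only step requiring care is the colorfulness check, which the independence hypothesis handles directly. The more delicate direction of the equivalence (reconstructing an independent set from an arbitrary motif solution, where one must argue that a near-optimal solution essentially selects whole paths) will be the companion lemma, not this one.
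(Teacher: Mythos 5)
Your proposal is correct and follows the same route as the paper's own proof: the same set $V'$ (root plus the full paths of the chosen vertices), connectivity via the shared root, colorfulness from the independence of $V'_I$ (only the duplicated edge colors $c_e$ could clash, and a clash would force both endpoints of an edge into $V'_I$), and the size bound from the $|V_I|^2$ black-box vertices per path. If anything, your colorfulness argument is spelled out slightly more carefully than the paper's one-line version, but the content is identical.
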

\begin{proof}
We build $V'$ as follows: $V' = \{r\} \cup \{v_i^e, v_i^j : v_i \in V'_I, e \in \adj(v_i), 1 \leq j \leq |V_I|^2\}$. In other words, we add in $V'$ the root $r$ of the tree and all the paths corresponding to the nodes of $V'_I$ (see also Figure~\ref{fig:max motif}). 

Let us prove that $V'$ is a solution for $\I'$ such that $|V'| \geq |V'_I|.|V_I|^2$. Since the root is in the solution, $G[V']$ is connected. Moreover, colors of $V'$ are all distinct, therefore the solution is colorful. Indeed, if there are $u,v \in V'$ such that $col(u) = col(v)$, then $\{u,v\} \in E$, which is a contradiction since $V'_I$ is a solution for \pmais. Finally, we bound the size of $V'$ by observing that for each $v \in V'_I$, we add the set of nodes in the path corresponding to $v$, which is of size $d(v) + |V_I|^2 \geq |V_I|^2$.\qed
\end{proof}

Let us now show how to build a solution for $\I$ from a solution for $\I'$.

\begin{lemma}\label{lem:retourMM}If there is a solution $V' \subseteq V$ for $\I'$, then there is a solution $V'_I \subseteq V_I$ for $\I$ such that $|V'_I| \geq \left \lceil \frac{|V'| - 2|E_I| - 1}{|V_I|^2} \right \rceil$.
\end{lemma}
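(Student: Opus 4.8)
The plan is to reverse the forward construction: starting from a valid \pmagm solution $V'$ — a connected set with $col(V') \subseteq M$, hence, as $M$ is colorful, a set inducing no repeated color — I want to read off an independent set of $G_I$ by recording which of the $|V_I|$ paths hanging from $r$ are actually ``used'' by $V'$. Concretely, I would set $V'_I = \{v_i : v_i^j \in V' \text{ for some } 1 \le j \le |V_I|^2\}$, i.e. the vertices whose path contributes at least one black-box node to $V'$. The two things to verify are that $V'_I$ is large enough and that it is independent.

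First I would establish the size bound by a counting argument. The only non-black-box vertices of $G$ are the root $r$ and the $2|E_I|$ edge-nodes, so $V'$ contains at most $1 + 2|E_I|$ of them; hence at least $|V'| - 2|E_I| - 1$ vertices of $V'$ are black-box nodes. Since each path owns exactly $|V_I|^2$ black-box nodes, and $V'_I$ indexes precisely the paths that contribute at least one such node, these black-box nodes are spread over $|V'_I|$ paths, giving $|V'_I| \cdot |V_I|^2 \ge |V'| - 2|E_I| - 1$, which is the claimed inequality after taking the ceiling (as $|V'_I|$ is an integer).

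Then comes the step I expect to be the crux: proving that $V'_I$ is independent in $G_I$. Suppose for contradiction that $e = \{v_i, v_j\} \in E_I$ with both $v_i, v_j \in V'_I$. Because the paths of $v_i$ and $v_j$ meet only at $r$, a connected $V'$ touching both paths must contain $r$; and since some black-box node of path $i$ lies in $V'$, connectivity in the tree forces the entire sub-path from $r$ down to that node to lie in $V'$ as well. As the edge-nodes of path $i$ all precede its black-box nodes in the ordering induced by $\adj$, this sub-path contains $v_i^e$; symmetrically $v_j^e \in V'$. But $col(v_i^e) = col(v_j^e) = c_e$, so $V'$ would repeat the color $c_e$, contradicting colorfulness. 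The degenerate case where $V'$ lives entirely inside a single path is harmless, since then $|V'_I| \le 1$ and a one-element set is trivially independent.

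The main obstacle is this last argument, and in particular the observation that including any black-box node ``drags in'' every edge-node lying above it on its path — this is exactly what the placement of the black-box tail after all edge-nodes guarantees, and it is what converts a color repetition into a forbidden edge, yielding independence. Once both steps are in place, $V'_I$ is an independent set of the required size and the lemma follows. \qed
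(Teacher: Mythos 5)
Your proof is correct, and it takes a slightly but genuinely different route from the paper's: the two constructions of $V'_I$ differ. The paper adds $v_i$ to $V'_I$ only when the \emph{whole} path of $v_i$ (all edge-nodes $v_i^e$ and all black-box nodes $v_i^j$) lies in $V'$, which makes independence immediate but shifts all the work onto the size bound --- one must then argue that the at least $|V'|-2|E_I|-1$ black-box nodes of $V'$ are concentrated in whole paths, and this is exactly where connectivity through $r$ (and the degenerate case of a $V'$ confined to one path, possibly missing its edge-nodes or its last black-box node) has to be invoked; the paper states this count in one line without that discussion. You instead add $v_i$ as soon as \emph{one} black-box node of its path is in $V'$: the size bound then becomes a clean pigeonhole over the $|V_I|^2$ black-box nodes per path, and the connectivity argument is spent on independence, where you correctly observe that any black-box node drags the entire edge-node prefix of its path into $V'$ (so $v_i^e$ and $v_j^e$ would both appear, clashing on the color $c_e$) and that the single-path case is harmless. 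Both proofs hinge on the same two facts --- only $1+2|E_I|$ vertices are not black-box nodes, and $col(v_i^e)=col(v_j^e)=c_e$ excludes both endpoints of an edge --- but your placement of the connectivity step makes the counting airtight in cases the paper's version leaves implicit.
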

\begin{proof}
For each $1 \leq i \leq |V_I|$, we add $v_i$ in $V'_I$ iff all the nodes $v_i^j, 1 \leq j \leq |V_I|^2$ and $v_i^e, e \in \adj(v_i)$ are in $V'$. In other words, we add $v_i$ in $V'_I$ if the whole path corresponding to this node is in $V'$. 

Let us prove that $V'_I$ is a solution for $\I$ such that $|V'_I| \geq \left \lceil \frac{|V'| - 2|E_I| - 1}{|V_I|^2} \right \rceil$. If there are $v_i,v_j \in V'_I$ such that $\{v_i,v_j\} = e \in E$, then $v_i^e$ and $v_j^e$ are in $V'$. It is impossible since $col(v_i^e) = col(v_j^e)$ and since all the colors of $V'$ must be distinct to be a solution for \pmagm. Consequently, $V'_I$ is an independent set. There are $\left \lceil \frac{|V'| - 2|E_I| - 1}{|V_I|^2} \right \rceil$ whole paths in $V'$. Indeed, by removing $2|E_G| + 1$ to the whole number of nodes in the solution, we bound the number of nodes of type $v_i^j$ (recall that $|V| = 1 + 2|E_I| + |V_I|.|V_I|^2$).\qed %\todo{bug ? On compte combien ca fait de chemins en nombre de sommets, mais rien ne dit que les chemins sont choisis entierrement... Doit marcher car des couleurs uniques dans le chemin donc rien n'empeche de les prendre}
\end{proof}

%Finally, let us prove that an approximation algorithm for the \pmagm problem with a ratio lower than $|V|^{\frac{1}{3} - \epsilon}, \epsilon > 0$, is unlikely unless $\p = \np$.
These two lemmas lead to the main result of this section.

\begin{proposition}\label{prop:approx magm} Unless $\p = \np$, \pmagm cannot be approximated within $|V|^{\frac{1}{3} - \epsilon}$ in polynomial time, for any $\epsilon > 0$, even when the motif is colorful and $G$ is a tree where each color of $C$ appears at most twice.
\end{proposition}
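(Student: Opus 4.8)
The plan is to turn the two preceding lemmas into an approximation-preserving reduction and then invoke the (essentially optimal) inapproximability of \pmais: unless $\p = \np$, for every $\delta > 0$ there is no polynomial-time algorithm approximating \pmais within a factor $n^{1-\delta}$, where $n = |V_I|$ (H\aa stad; Zuckerman). Throughout, write $\mathrm{opt}_I$ for the size of a maximum independent set of $G_I$ and $\mathrm{opt}'$ for the size of a largest solution of $\I'$.

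First I would record the vertex count of the construction, $|V| = 1 + 2|E_I| + |V_I|^3$. Since $0 \le 2|E_I| \le |V_I|(|V_I|-1) < |V_I|^2$, this gives $|V_I|^3 \le |V| \le 2|V_I|^3$ for $|V_I| \ge 2$, hence $|V_I| = \Theta(|V|^{1/3})$; the quadratic padding length $|V_I|^2$ of each path, producing a cubic blow-up in the number of vertices, is precisely the source of the exponent $\tfrac13$. I would also note, as a sanity check, that Lemmas~\ref{lem:directMM} and~\ref{lem:retourMM} together pin the optimum: Lemma~\ref{lem:directMM} yields $\mathrm{opt}' \ge \mathrm{opt}_I \cdot |V_I|^2$, while Lemma~\ref{lem:retourMM} applied to an optimal motif gives $\mathrm{opt}_I \ge (\mathrm{opt}' - 2|E_I| - 1)/|V_I|^2$, so $\mathrm{opt}_I \cdot |V_I|^2 \le \mathrm{opt}' \le \mathrm{opt}_I \cdot |V_I|^2 + 2|E_I| + 1$, an interval of width less than one padding block.

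The core step argues by contradiction. Suppose \pmagm admits a polynomial-time $r$-approximation with $r = |V|^{1/3 - \epsilon}$ for some fixed $\epsilon > 0$. Run it on $\I'$ to obtain a solution $V'$ with $|V'| \ge \mathrm{opt}'/r \ge \mathrm{opt}_I \cdot |V_I|^2 / r$ (using Lemma~\ref{lem:directMM}), and feed $V'$ into the procedure of Lemma~\ref{lem:retourMM} to extract an independent set of $G_I$ of size
\[
a \;\ge\; \frac{|V'| - 2|E_I| - 1}{|V_I|^2}\;\ge\;\frac{\mathrm{opt}_I}{r} - \frac{2|E_I| + 1}{|V_I|^2}\;>\;\frac{\mathrm{opt}_I}{r} - 1 ,
\]
where the last inequality uses $2|E_I| + 1 < |V_I|^2$. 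Thus an $r$-approximation for \pmagm yields, in polynomial time, an independent set within a factor tending to $r$ of $\mathrm{opt}_I$; the additive loss of $1$ only disturbs lower-order terms and is absorbed by appealing to the gap form of the \pmais hardness, in which $\mathrm{opt}_I$ is guaranteed to be large on yes-instances.

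It remains to convert the exponent. From $|V| \le 2|V_I|^3$ we get $r = |V|^{1/3-\epsilon} \le 2^{1/3}\,|V_I|^{\,1-3\epsilon}$, so the derived algorithm approximates \pmais within essentially $|V_I|^{1-3\epsilon}$; choosing any $\delta$ with $0 < \delta < 3\epsilon$, this beats the factor $|V_I|^{1-\delta}$ for all large $|V_I|$, contradicting the inapproximability of \pmais unless $\p = \np$. Since the reduction is polynomial and, as already observed, $G$ is a tree in which the colorful motif uses each color at most twice, the claimed restricted hardness follows. The step I expect to require the most care is this exponent bookkeeping --- verifying that the cubic vertex blow-up turns the near-tight $n^{1-\delta}$ hardness of \pmais into exactly the $|V|^{1/3-\epsilon}$ bound --- together with confirming that the additive $2|E_I| + 1$ never reaches a full padding block, so that it perturbs only lower-order terms rather than the leading ratio.
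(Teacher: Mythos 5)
Your proposal is correct and follows essentially the same route as the paper: the same chain of inequalities through Lemmas~\ref{lem:directMM} and~\ref{lem:retourMM}, the same observation that $(2|E_I|+1)/|V_I|^2 \le 1$ bounds the additive loss by one, and the same conversion of the cubic vertex blow-up into the exponent $\tfrac13$ via the $|V_I|^{1-\delta}$ hardness of \pmais. Your extra care with the exponent bookkeeping and with absorbing the additive $-1$ via the gap form of the \pmais hardness only makes explicit what the paper leaves implicit.
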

\begin{proof}

%Suppose there is such a ratio $r$ for the \pmagm problem. Then, there is an approximate solution $T_{APX} = (V_{T_{APX}}, E_{T_{APX}})$ which, compared to the optimal solution $T_{OPT} = (V_{T_{OPT}}, E_{T_{OPT}})$, is of size $|V_{T_{APX}}| \geq \frac{|V_{T_{OPT}}|}{r}$. 

Suppose there is such a ratio $r$ for \pmagm. Then, there is an approximate solution $V'_{APX}$ which, compared to the optimal solution $V'_{OPT}$, is of size $|V'_{APX}| \geq \frac{|V'_{OPT}|}{r}$. 

\noindent
With Lemma~\ref{lem:directMM}, $  |V'_{OPT}|   \geq |V'_{I_{OPT}}|.|V_{I}|^2.$\\
We supposed $   |V'_{APX}| \geq  \frac{|V'_{OPT}|}{r}.$\\
Therefore, $  |V'_{APX}|   \geq \frac{|V'_{I_{OPT}}|.|V_{I}|^2}{r}.$\\
With Lemma~\ref{lem:retourMM}, $ |V'_{I_{APX}}|  \geq \left \lceil \frac{|V'_{APX}| - 2|E_I| - 1}{|V_I|^2} \right \rceil.$ \\
Which leads to, $ |V'_{I_{APX}}|  \geq \left \lceil \frac{((|V'_{I_{OPT}}|.|V_{I}|^2)/r) - 2|E_I| - 1}{|V_I|^2} \right \rceil.$ \\
Since $\frac{2|E_I| + 1}{|V_I|^2} \leq 1$, $ |V'_{I_{APX}}|  \geq \left \lceil \frac{(|V'_{I_{OPT}}|.|V_{I}|^2)/r}{|V_I|^2} \right \rceil - 1$.\\ % = \frac{|V'_{I_{OPT}}|}{r} - 1.$\\
Finally, $ |V'_{I_{APX}}|  \geq \frac{|V'_{I_{OPT}}|}{r} - 1.$

Thereby, if there is an approximation algorithm with ratio $r$ for \pmagm, there is an approximation algorithm with ratio $r$ for \pmais. We conclude the proof by observing that $|V| = \bigo(|V_I|^3)$ and that unless $\p = \np$, \pmais cannot be approximated in polynomial time within $|V_I|^{1 - \epsilon}$, $\forall \epsilon > 0$~\cite{Zuckerman2007}.\qed
%\todo{Zuckerman (linar clique blabla) aurait une meilleure borne pour ind sex (1-e).}
%\todo{Mon $n$ change entre les 2, donc j'ai de la perte sur la borne.. Si $n'=n^3$, alors j'ai $n^{1/3 - e}$ (francois)}
\end{proof}

\section{Minimizing the number of substitutions}\label{ref:subst}

In this section, we focus on \pmisgm, a problem recently introduced by Dondi \textit{et al.}~\cite{Dondi2011}. In this variant, some colors of the motif can be deleted, but the size of the solution must be equal to $|M|$. Therefore, the deleted colors must be substituted by the same number of colors. 

%On s'intéresse maintenant à un cas particulier du problème pgmg introduit sous le nom de \textsc{Min-Substitute} par Dondi \textit{et al.} \cite{Dondi2011}, où la solution doit avoir une taille exactement égale à $|M|$ et où $k = |M|-s$, avec $s$ un nombre soit fixé (problème de décision), soit à minimiser (problème de minimisation). Intuitivement, $s$ correspond à un nombre de \textit{substitutions}. On autorise que $s$ couleurs du motif soient absentes, mais, la taille de la solution doit tout de même être égale à $|M|$ (voir Figure~\ref{fig:ex min subst}). Pour plus de commodité, on définit le problème (d'optimisation) sous un nom spécifique, proche de celui utilisé par \cite{Dondi2011}.

%\PbSortie{\pmsmgm}{Un graphe $G = (V,E)$, un ensemble de couleurs $C$, une fonction $col : V \rightarrow C$, un multi-ensemble $M$ sur $C$.}{Un sous-arbre $T = (V_T,E_T)$ de $G$ tel que (i) $|V_T|=|M|$ et (ii) on obtient $M$ depuis $col(V_T)$ un minimum de substitutions.}

%\PbSortie{\pmisgm}{Un graphe $G = (V,E)$, un ensemble de couleurs $C$, une fonction $col : V \rightarrow C$, un multi-ensemble $M$ sur $C$.}{Un sous-ensemble $V_T \subseteq V$ tel que (i) $|V_T|=|M|$, (ii) $G|V_T]$ soit connexe et (iii) on puisse obtenir $M$ depuis $col(V_T)$ avec un minimum de substitutions.}

%\PbOpt{\pmisgm}{A graph $G = (V,E)$, a set of colors $C$, a function $col : V \rightarrow C$, a multiset $M$ over $C$.}{A subset $V' \subseteq V$ such that (i) $|V'|=|M|$ and (ii) $G|V']$ is connected.}{The number of substitutions to get $M$ from $col(V')$.}
\PbOpt{\pmisgm}{A graph $G = (V,E)$, a set of colors $C$, a function $col : V \rightarrow C$, a multiset $M$ over $C$.}{A subset $V' \subseteq V$ such that (i) $|V'|=|M|$, (ii) $G|V']$ is connected and such that the number of substitutions to get $M$ from $col(V')$ is minimized.} %{The number of substitutions to get $M$ from $col(V')$.}

%L'ensemble $C$ est potentiellement plus grand que $M$ car on doit pouvoir substituer des couleurs de $M$ par des couleurs présentes dans $C \setminus M$. 

%Le résultat suivant n'a pas été publié: 

Dondi \textit{et al.}~\cite{Dondi2011} prove that \pmisgm is $\np$-hard, even when $G$ is a tree of maximum degree $4$ where each color occurs at most twice and the motif is colorful. On the positive side, they prove that the decision version of the problem is in the $\fpt$ class %, via a dynamic programming algorithm, 
when the parameter is the size of the solution. Another algorithm is provided by Koutis in time $\bigos(5.08^k)$ and polynomial space~\cite{Koutis2012}. We however remark that we can use the algorithm of~\cite{guillemotfinding} for \textsc{List Colored Graph Motif} to solve \pmisgm in $\bigos(4^k)$, by introducing $p$ new colors, and adding them in the list of colors of each node of the graph. Therefore, we can still look for a solution of size $k$, with at most $p$ substitutions. %Let us observe that this problem is not in $\xp$ for parameter number of substitutions since \pegm is $\np$-complete, a case where the number of substitutions is equal to 0 (therefore, ).

Let us remark that for instances where the optimum equals 0, then the problem is not approximable at all. Indeed, any polynomial-time approximation algorithm for \pmisgm would give a solution in polynomial time to the $\np$-complete \pegm problem. However, for instances where optimum contains at least one substitution, we prove that even in restrictive conditions (when $G$ is a rooted tree of depth 2 and the motif is colorful), there is no polynomial-time approximation algorithm with ratio $c \log |V|$, for a constant $c$.

To prove such inapproximability result, we propose an \mbox{E-reduction} from \pmisc~\cite{Raz1997}, a problem stated as follows: 
%Given a set $X = \{x_1,x_2,\dots,x_{|X|}\}$ and a collection $\S = \{S_1,S_2,\dots,S_{|\S|}\}$ of subsets of $X$, find the minimum subset $\S' \subseteq \S$ such that every element of $X$ belongs to at least one member of $\S'$. 
\PbOpt{\pmisc}{A set $X = \{x_1,x_2,\dots,x_{|X|}\}$, a collection $\S = \{S_1,S_2,\dots,S_{|\S|}\}$ of subsets of $X$}{A subset $\T \subseteq \S$ such that every element of $X$ belongs to at least one member of $\T$, and such that $|\T|$ is minimized.}

%We denote by $e(i,j)$ the index of $X$ corresponding to the $j$-th element of $S_i$. 
We denote by $e(i,j)$ the index $l$ such that $x_l$ correspond to the $j$-th element of $S_i$. 
We first describe the polynomial construction of $\I'=(G,C,col,M)$, instance of \pmisgm, from $\I = (X,\S)$, any instance of \pmisc. %On note $X = \{x_1,\dots,x_{|X|}\}$ et $\S = \{S_1,\dots,S_{|\S|}\}$. 
%Enfin, pour $1 \leq i \leq |\S|$, $e_i^j$ est le $j^{eme}$ élément de $S_i$. 
%Enfin, on considère que le problème \pmsc demande une solution de taille au plus $a$.
From an instance $\I$, let build $G=(V,E)$ as follows (see also Figure~\ref{fig:min subst}):

\begin{tabular}{ll}
-- $V =$ & $\{r\}\ \cup $ $ \{v_i : 1 \leq i \leq |\S| \}\ \cup$\\
      & $ \{v_{i,j,t} : 1 \leq i \leq |\S|, 1 \leq j \leq |S_i|, 1 \leq t \leq |\S|+1\}$, % \cup \{a_i : 1 \leq i \leq |\S|\}$
\end{tabular}

\begin{tabular}{ll}
-- $E =$ & $\{ \{r,v_i\} : 1 \leq i \leq |\S| \}\ \cup$ \\
        & $ \{\{v_i, v_{i,j,t}\} : 1 \leq i \leq |\S|, 1 \leq j \leq |S_i|, 1 \leq t \leq |\S|+1 \}$. % \cup \{ r,a_1\} \cup \{ \{a_i,a_{i+1}\} : 1 \leq i < |\S| \} $
\end{tabular}

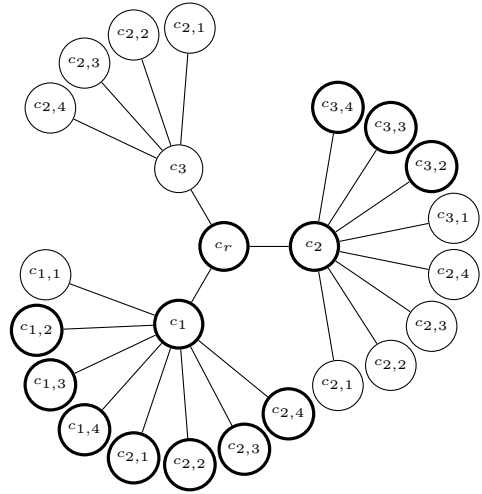
\begin{figure}[h!]
\centering
%level/.style={sibling distance=55mm/#1}
%level distance=5cm,level 1/.style={},level 2/.style={sibling distance=1.1cm}
\begin{tikzpicture}[scale=.85, minimum size=18,grow cyclic]
	\tikzstyle{level 1} = [sibling distance=6.5cm,level distance = 1.4cm,sibling angle=120]
	\tikzstyle{level 2} = [sibling distance=1.5cm,level distance = 2.2cm,,sibling angle=23]
	%\node () at (-5.5,-4.2) {$G=(V,E)$};
\begin{tiny}
	\node[vertex,very thick] (r) {$c_r$}
		child {node[vertex,very thick] (S1) {$c_1$}
			child {node[vertex] (e111) {$c_{1,1}$}}
			child {node[vertex,very thick] (e112) {$c_{1,2}$}}
			child {node[vertex,very thick] (e113) {$c_{1,3}$}}
			child {node[vertex,very thick] (e114) {$c_{1,4}$}}
			child {node[vertex,very thick] (e121) {$c_{2,1}$}}
			child {node[vertex,very thick] (e122) {$c_{2,2}$}}
			child {node[vertex,very thick] (e123) {$c_{2,3}$}}
			child {node[vertex,very thick] (e124) {$c_{2,4}$}}
		}
		child {node[vertex,very thick] (S2) {$c_2$}
			child {node[vertex] (e211) {$c_{2,1}$}}
			child {node[vertex] (e212) {$c_{2,2}$}}
			child {node[vertex] (e213) {$c_{2,3}$}}
			child {node[vertex] (e214) {$c_{2,4}$}}
			child {node[vertex] (e221) {$c_{3,1}$}}
			child {node[vertex,very thick] (e222) {$c_{3,2}$}}
			child {node[vertex,very thick] (e223) {$c_{3,3}$}}
			child {node[vertex,very thick] (e224) {$c_{3,4}$}}
		}
		child {node[vertex] (S3) {$c_3$}
			child {node[vertex] (e311) {$c_{2,1}$}}
			child {node[vertex] (e312) {$c_{2,2}$}}
			child {node[vertex] (e313) {$c_{2,3}$}}
			child {node[vertex] (e314) {$c_{2,4}$}}
		}	
	;
\end{tiny}
%	\node[vertex] (e111) at (-10,-4) {$c_{1,1}$};
%	\node[vertex,very thick] (e112) at (-9,-4) {$c_{1,2}$};
%	\node[vertex,very thick] (e113) at (-8,-4) {$c_{1,3}$};
%	\node[vertex,very thick] (e114) at (-7,-4) {$c_{1,4}$};
%	\node[vertex,very thick] (e121) at (-6,-4) {$c_{2,1}$};
%	\node[vertex,very thick] (e122) at (-5,-4) {$c_{2,2}$};
%	\node[vertex,very thick] (e123) at (-4,-4) {$c_{2,3}$};
%	\node[vertex,very thick] (e124) at (-3,-4) {$c_{2,4}$};
%	
%	\node[vertex] (e221) at (-3,-5) {$c_{2,1}$};
%	\node[vertex] (e222) at (-2,-5) {$c_{2,2}$};
%	\node[vertex] (e223) at (-1,-5) {$c_{2,3}$};
%	\node[vertex] (e224) at (0,-5) {$c_{2,4}$};
%	\node[vertex] (e231) at (1,-5) {$c_{3,1}$};
%	\node[vertex,very thick] (e232) at (2,-5) {$c_{3,2}$};
%	\node[vertex,very thick] (e233) at (3,-5) {$c_{3,3}$};
%	\node[vertex,very thick] (e234) at (4,-5) {$c_{3,4}$};
%	
%	\node[vertex] (e321) at (4,-4) {$c_{2,1}$};
%	\node[vertex] (e322) at (5,-4) {$c_{2,2}$};
%	\node[vertex] (e323) at (6,-4) {$c_{2,3}$};
%	\node[vertex] (e324) at (7,-4) {$c_{2,4}$};
%	
%	\draw (e111) -- (S1) -- (e112) (e113) -- (S1) -- (e114) (e121) -- (S1) -- (e122) (e123) -- (S1) -- (e124);
%	\draw (e221) -- (S2) -- (e222) (e223) -- (S2) -- (e224) (e231) -- (S2) -- (e232) (e233) -- (S2) -- (e234);
%	\draw (e321) -- (S3) -- (e322) (e323) -- (S3) -- (e324);
%	\foreach \i in {1,...,5}{
%		\node[vertex] (t\i) at (1+\i,0) {$c_{t_\i}$};
%	}

%	\node[vertex] (s1) at (7,0) {$c_{s_1}$}; 
%	\node (d) at (8,0) {$\dots$};
%	\node[vertex] (sa) at (9,0) {$c_{s_a}$};

%	\path[draw] (r) -- (t1) -- (t2) -- (t3) -- (t4) -- (t5) -- (s1) --(d) -- (sa);

	%\draw(s1) -- (d);
\end{tikzpicture}
\caption{Illustration of the construction of an instance of \pmisgm from an instance of \pmisc such that $X = \{x_1, x_2, x_3\}$ and $\S = \{\{x_1,x_2\}, \{x_2,x_3\}, \{x_2\}\}$. For ease, only the color of each node of the graph (and not the label) is given. The associated motif is $M = \{c_r\} \cup \{c_{k,t} : 1 \leq k \leq 3, 1 \leq t \leq 4\}$. A possible solution (with two substitutions) is given in bold.}\label{fig:min subst}
\end{figure}

Informally speaking, $r$ is the root of a tree with $|\S|$ children, corresponding to each subset of $\S$. Each child $v_i, 1 \leq i \leq |\S|$, got $(|\S|+1)|S_i|$ children, corresponding to $|\S|+1$ copies of each element of $S_i$. %Enfin, un chemin de longueur $|\S|$ part de la racine $r$.
The set of colors is $C = \{c_r\} \cup \{c_i : 1 \leq i \leq |\S| \} \cup \{c_{k,t} : 1 \leq k \leq |X|, 1 \leq t \leq |\S|+1\} % \cup \{c_{a_i} : 1 \leq i \leq |\S|\}
$. The coloring function is such that the root has a unique color, \textit{i.e.} $col(r) = c_r$. Each node $v_i$ is colored with the unique color corresponding to the subset of $\S$, $col(v_i) = c_i, \forall 1 \leq i \leq |\S|$. Each node $v_{i,j,t}$  get the color of the copy of the represented element, \textit{i.e.} $col(v_{i,j,t}) = c_{e(i,j),t}$. %Enfin, les autres sommets reçoivent une couleur unique, c'est-à-dire $col(a_i) = c_{a_i}, \forall 1 \leq i \leq |\S|$. 
Finally, the motif is $M = \{c_r\} \cup \{c_{k,t} : 1 \leq k \leq |X|,1 \leq t \leq |\S|+1\}$. Observe that the colors $\{c_i : 1 \leq i \leq |\S|\}$ % et $\{c_{a_i} : 1 \leq i \leq |\S|\}$ 
are not in the motif (which is colorful by construction). Let now show how to build a solution for $\I'$ from a solution for $\I$, and \textit{vice-versa}.

\begin{lemma}\label{lem:direct}If there is a solution $\T$ for an instance $\I$ of \pmisc, there is a solution for the instance $\I'$ of \pmisgm with $|\T|$ substitutions.
\end{lemma}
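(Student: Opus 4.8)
The plan is to turn a cover $\S'$ directly into a connected subtree of $G$ whose colors realize the whole motif except for exactly $|\S'|$ entries. I would build $V'$ by putting $r$ into it, adding the node $v_i$ for every $S_i \in \S'$, and then spending the remaining budget on leaves: for each of a suitable family of $|X|(|\S|+1) - |\S'|$ motif colors $c_{k,t}$, include one leaf of that color. Such a leaf exists with its parent already in $V'$: since $\S'$ is a cover there is some $S_i \in \S'$ with $x_k \in S_i$, and taking the index $j$ with $e(i,j) = k$ gives a leaf $v_{i,j,t}$ of color $c_{k,t}$ whose parent $v_i$ lies in $V'$. So every color $c_{k,t}$ is reachable and I may pick any $|X|(|\S|+1) - |\S'|$ of them (see Figure~\ref{fig:min subst}).

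The crux is the size bookkeeping. Since $|M| = 1 + |X|(|\S|+1)$ and the measure requires $|V'| = |M|$, spending $1 + |\S'|$ of that budget on $r$ and the set-nodes leaves room for exactly $|X|(|\S|+1) - |\S'|$ leaves. This number is positive (as $|\S'| \le |\S| < |X|(|\S|+1)$ for a nonempty $X$), so the selection is always feasible, and it is precisely the cover property that lets these leaves jointly carry $|X|(|\S|+1) - |\S'|$ distinct motif colors.

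It then remains to verify the three requirements. Connectivity holds because $G[V']$ is a subtree rooted at $r$: each chosen $v_i$ is adjacent to $r$, and each chosen leaf is adjacent to its parent $v_i \in V'$. The size is $1 + |\S'| + (|X|(|\S|+1) - |\S'|) = |M|$. For the measure, $col(V')$ is the set consisting of $c_r$, the $|\S'|$ colors $c_i$ (none of which belong to $M$), and $|X|(|\S|+1) - |\S'|$ distinct colors $c_{k,t}$; hence $col(V')$ and $M$ coincide on $c_r$ and on the chosen $c_{k,t}$, and differ exactly on the $|\S'|$ set-colors $c_i$ against the $|\S'|$ omitted motif colors. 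The substitution distance is therefore $|\S'|$.

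The part I expect to require care is this accounting rather than the construction itself: one must see that the equality $|V'| = |M|$ forces each of the $|\S'|$ unavoidable set-nodes, which carry a color outside $M$, to be paid for by one missing motif color, so that a cover of size $|\S'|$ yields a solution with exactly $|\S'|$ substitutions.
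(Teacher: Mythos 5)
Your proposal is correct and follows essentially the same route as the paper: include $r$, the set-nodes $v_i$ for $S_i \in \S'$, and enough leaves below them to reach $|V'|=|M|$, so that the $|\S'|$ colors $c_i$ (which lie outside $M$) account for exactly $|\S'|$ substitutions. The only difference is that the paper fixes a canonical choice of which $|\S'|$ motif colors to omit (the $t=1$ copy of the first element newly covered by each chosen set), whereas you leave that choice arbitrary; this is equally valid and in fact makes the size bookkeeping slightly more robust.
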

\begin{proof}Let $\T \subseteq \S$ be a solution for $\I$. Given a total order on the subsets of $\S$, for each $1 \leq k \leq |X|$, denote by $S_{min}^k$ the subset such that (i) $S_{min}^k \in \T$, and (ii) $S_{min}^k$ is the first subset of $\T$ where $x_k$ is. Moreover, for each $S_i$, denote by $f_i$ the smallest index $j$ of $v_{i,j,t}$ such that $S_i = S_{min}^{e(i,j)}$. %Autrement dit, $f_i$ correspond au plus petit indice d'un élément choisi de $S_i$.

The solution $V'$ is  built as follows: $V' = \{r\} \cup \{v_i : S_i \in \T\} \cup \{v_{i,j,t} : S_i = S_{min}^{e(i,j)}, j=f_i, 2 \leq t \leq |\S|+1 \} \cup \{v_{i,j,t} : S_i = S_{min}^{e(i,j)}, j \neq f_i, 1 \leq t \leq |\S|+1 \}$. Less formally, we put in the solution the root, the set of nodes representing subsets $S_i$ of $\T$, also with the $|\S|+1$ copies of each node representing an $x_k$ (the one in the subset with minimal index in the solution), except for the element $x_k$ of $X$ with the lower index in $S_{min}^k$, where only $|\S|$ copies are in the solution.

The graph $G[V']$ is connected since the nodes $v_{i,j,t}$ are in the solution if and only if the node $v_i$ is also in the solution. Moreover, a node $v_i$ is in the solution if there is a $k$ such that $S_i = S_{min}^k$. There is thus an integer $f_i$ for which only $|\S|$ copies of $v_{i,f_i,t}$ are in the solution. Therefore, by construction, the color $c_{e(i,f_i),1}$, which is in the motif, is substituted in the solution by $c_i$. On the whole, there are $|\T|$ substitutions, since the other colors of the motif are in the solution.\qed
\end{proof}

\begin{lemma}\label{lem:retour}From a solution for the instance $\I'$ for \pmisgm with at most $s$ substitutions, there is a solution for the instance $\I$ for \pmisc of size at most $s$.
\end{lemma}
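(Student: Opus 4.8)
The plan is to extract a set cover directly from the intermediate vertices $v_i$ that $V'$ selects, and to show that when those sets fail to cover $X$ the substitution count is already large enough that the full collection $\S$ is a valid cover of size at most $s$. Two structural observations drive everything. Since $G$ is a tree and $|V'| = |M| \ge 2$, the set $V'$ induces a subtree, so every leaf $v_{i,j,t}$ lying in $V'$ forces its parent $v_i$ into $V'$; consequently, writing $p := |\{i : v_i \in V'\}|$, each leaf of $V'$ sits below one of these $p$ chosen vertices. I write $[\,r \in V'\,]$ for the indicator of whether $r$ is selected.

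Next I would bound the substitutions, which (because $|V'| = |M|$) equal $s = |M| - m$, where $m$ is the size of the multiset intersection of $col(V')$ and $M$. The colors $c_i$ of the selected intermediate vertices are distinct and lie outside $M$, so they contribute nothing to $m$; $c_r$ contributes at most $[\,r \in V'\,]$; and since each leaf carries a single color $c_{k,t}$, the leaves contribute at most their number, namely $|V'| - [\,r\in V'\,] - p$. Hence $m \le |V'| - p$ and $s \ge |M| - |V'| + p = p$. For the second bound, let $\S' := \{S_i : v_i \in V'\}$ and $u := |\bigcup \S'|$: because every leaf of $V'$ lies below some $v_i \in V'$, only the colors $c_{k,t}$ with $x_k \in \bigcup \S'$ can occur, and there are $u(|\S|+1)$ of them, giving $m \le 1 + u(|\S|+1)$ and thus $s \ge (|X|-u)(|\S|+1)$.

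Finally I would conclude by a case distinction, assuming without loss of generality that $\bigcup \S = X$ so that the instance is feasible. If $u = |X|$, then $\S'$ already covers $X$ and has size $p \le s$, which is the desired cover. If $u < |X|$, the second bound gives $s \ge (|X|-u)(|\S|+1) \ge |\S|+1 > |\S|$, so the trivial cover $\S$ has size strictly below $s$. Either way a cover of size at most $s$ exists. The one genuinely non-routine point is this second case: a single uncovered element must cost at least $|\S|+1$ substitutions, and this is exactly the purpose of replicating each element of $S_i$ into $|\S|+1$ copies $v_{i,j,t}$. That redundancy makes leaving any element uncovered strictly more expensive than taking every set, and it is the quantitative mechanism that turns the correspondence ``sets used $\leftrightarrow$ substitutions'' into a correct L-reduction.
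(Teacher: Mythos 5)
Your proof is correct and follows essentially the same route as the paper's: you take $\S' = \{S_i : v_i \in V'\}$, use the fact that the colors $c_i$ lie outside $M$ to get $|\S'| \le s$, and use the $(|\S|+1)$-fold replication to show that an uncovered element would force more than $|\S|$ substitutions, in which case the trivial cover $\S$ suffices. The only difference is presentational — you phrase the two key facts as explicit counting inequalities and a final case split, where the paper assumes $s < |\S|+1$ up front and argues the coverage by contradiction.
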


\begin{proof}
Let $V' \subseteq V$ be a solution for $\I'$ such that we can obtain $M$ from $col(V')$ with at most $s$ substitutions. We can suppose that $s < |\S|+1$, otherwise, $\T = \S$ is a solution of correct size.
Solution for $\I$ is built as follows: $\T = \{S_i : v_i \in V'\}$. 

If for some $1 \leq k \leq |X|$ there is no color of the set $\{c_{k,t} : 1 \leq t \leq |\S|+1\}$ in the solution, it means that these $|\S|+1$ colors have all been substituted, which is a contradiction with the supposed maximum number of $s$ substitutions. Therefore, for each $1 \leq k \leq |X|$, there is at least one color from the set $\{c_{k,t} : 1 \leq t \leq |\S|+1\}$ in the solution. By definition, a solution must be connected, therefore, all elements of $X$ are covered by $\T$. Finally, the size of $\T$ is bounded by $s$. Indeed, since their colors are not in the motif, there are at most $s$ nodes $v_i$ in $V'$.\qed

%Remarquons premièrement qu'aucun sommet de $\{s_i : 1 \leq i \leq a\}$ ne peut être présent dans $V_T$. En effet, si on suppose que l'un d'entre eux est présent, il faudrait que tous les sommets $\{t_i : 1 \leq i \leq |\S|+1\}$ s'y trouvent. Or, les couleurs de ces derniers ne sont pas dans $M$, il faudrait donc procéder à $|\S|+1$ substitutions pour retrouver $M$ depuis $col(V_T)$. Par définition, $|\S|+1 > a$, car une solution pour \pmsc est de taille au plus $|\S|$. 
%
%Par conséquent, comme les couleurs $\{c_{s_i} : 1 \leq i \leq a\}$ sont présentes dans le motif, elles ont toutes été substituées par $a$ autres couleurs. De plus, aucune autre substitution n'est possible (nous avons vu que $a$ substitutions, le maximum possible, étaient nécessaires). Alors, toutes les autres couleurs de $M$ doivent être présentes dans la solution connexe. C'est-à-dire, $\{c_r\} \cup \{c_{x_i} : 1 \leq i \leq |X|\} \subseteq col(V_T)$. Pour que la solution soit connexe, certains sommets $v_i$ sont dans $V_T$ afin de lier $r$ aux sommets $v_i^j$. Comme les couleurs des sommets $v_i$ ne sont pas dans le motif, au plus $a$ d'entre eux peuvent être insérés, en substituant les couleurs $\{c_{s_i} : 1 \leq i \leq a\}$.
%
%La solution pour $\I$ est alors construite comme suit: $\S' = \{S_i : v_i \in V_T\}$. Il y a au plus $a$ sommets $v_i$ dans $V_T$, donc $|\S'| \leq a$. Enfin, tous les éléments $\{x_i : 1 \leq i \leq |X|\}$ sont bien couverts car $G[V_T]$ est connexe et comporte toutes les couleurs $\{c_{x_i} : 1 \leq i \leq |X|\}$. 
\end{proof}

The above construction and two Lemmas lead to the result concerning the approximation of \pmisgm.
%We can now state the main result of this section.

\begin{proposition}\label{prop:min subst}
Unless $\p = \np$, \pmisgm cannot be approximated in polynomial time within $c \log |V|$, where $c$ is a constant, even when the motif is colorful and $G$ is a rooted tree of depth 2. 
\end{proposition}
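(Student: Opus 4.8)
The plan is to treat Lemmas~\ref{lem:direct} and~\ref{lem:retour} as the two halves of an approximation-preserving reduction from \pmisc to \pmisgm and then to import the known logarithmic inapproximability of \pmisc. Since the construction of $\I'$ from $\I$ is polynomial and produces a tree of depth~2 with a colorful motif (as verified before the lemmas), all the structural restrictions claimed in the statement come for free; the remaining work is purely about optima and ratios.

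First I would use the two lemmas to pin down the optimal values and the ratio transfer. Write $\mathrm{opt}(\I)$ for the minimum size of a set cover of $\I$ and $\mathrm{opt}(\I')$ for the minimum number of substitutions for $\I'$. Applying Lemma~\ref{lem:direct} to an optimal cover gives a solution of $\I'$ with $\mathrm{opt}(\I)$ substitutions, so $\mathrm{opt}(\I') \leq \mathrm{opt}(\I)$; applying Lemma~\ref{lem:retour} to an optimal solution of $\I'$ gives a cover of size at most $\mathrm{opt}(\I')$, so $\mathrm{opt}(\I) \leq \mathrm{opt}(\I')$. Hence $\mathrm{opt}(\I) = \mathrm{opt}(\I')$, which is exactly the $\alpha = 1$ condition of an L-reduction. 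Moreover, Lemma~\ref{lem:retour} turns any solution of $\I'$ with $s$ substitutions into a cover of size at most $s$, giving $\beta = 1$. Consequently, if \pmisgm admitted a polynomial $r$-approximation, then for any instance $\I$ of \pmisc I would build $\I'$, run the algorithm to get a solution with at most $r\cdot\mathrm{opt}(\I') = r\cdot\mathrm{opt}(\I)$ substitutions, and invoke Lemma~\ref{lem:retour} to recover in polynomial time a cover of size at most $r\cdot\mathrm{opt}(\I)$ --- an $r$-approximation for \pmisc, with the ratio preserved exactly.

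The last step, which I expect to be the only genuinely delicate point, is matching the threshold $c\log|V|$ stated in terms of the constructed graph with the threshold known for \pmisc, which is stated in terms of $|X|$. Counting vertices gives $|V| = 1 + |\S| + (|\S|+1)\sum_i |S_i| = \bigo(|\S|^2|X|)$, so the reduction blows the instance up only polynomially; on the hard instances of \pmisc (for which $|\S|$ is polynomial in $|X|$) this yields $\log|V| \leq C\log|X|$ for a fixed constant $C$. Thus a $c\log|V|$-approximation for \pmisgm would, through the reduction above, give a $(cC)\log|X|$-approximation for \pmisc. Since \pmisc admits no $c_0\log|X|$-approximation for some fixed $c_0 > 0$ unless $\p = \np$~\cite{Raz1997}, choosing $c$ small enough that $cC < c_0$ yields a contradiction and proves the proposition. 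The combinatorial heart of the argument lives entirely in the two preceding lemmas; the care required here is only in tracking the constant-factor relation between $\log|V|$ and $\log|X|$ so that the logarithmic gap survives the reduction.
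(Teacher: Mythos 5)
Your proposal is correct and follows essentially the same route as the paper: both treat Lemmas~\ref{lem:direct} and~\ref{lem:retour} as an L-reduction from \pmisc that preserves the optimum exactly and then invoke the logarithmic inapproximability of \pmisc from~\cite{Raz1997}. You merely spell out the details the paper leaves implicit, in particular the polynomial bound $|V|=\bigo(|\S|^2|X|)$ needed to convert the $c\log|X|$ threshold into one in terms of $\log|V|$.
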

\begin{proof}
The proof comes directly from the Lemmas~\ref{lem:direct} and \ref{lem:retour}, and because \pmisc cannot be approximated in polynomial time within $c \log |X|$, unless $\p = \np$~\cite{Raz1997}. Observe that from any solution $V'$ for $\I'$, we can construct in polynomial time a solution $\T$ for $\I$ such that $val(\I,\T) = val(\I', V')$, thus $opt(\I) \leq opt(\I')$. A consequence of Lemma~\ref{lem:direct} is that $opt(\I') \leq opt(\I)$. We then have $opt(\I') = opt(\I)$ and 
%given a solution $\T$ for $\I$ and a solution $V'$ for $\I'$, we have $val(\I,\T) = val(\I', V')$. C
consequently, $\varepsilon(\I,\T) = \frac{val(\I,\T)}{opt(\I)} - 1  \leq \frac{val(\I',V')}{opt(\I')} - 1 = \varepsilon(\I',V')$. %, providing a correct E-reduction.
%the parameter is strictly the same between the two instances $\I$ and $\I'$, providing a correct E-reduction.
\qed
\end{proof}

%As a corollary of Proposition~\ref{prop:min subst}, we remark that the reduction is also a parameterized reduction. Since \pmisc is $\wtwo$-hard if parameterized by the number of subsets in the solution~\cite{Niedermeier2006}, \pmisgm is also $\wtwo$-hard if parameterized by the number of substitutions. %We already knew the problem $\wone$-hard with this parameter since \pegm is $\np$-complete.
%the contrary implies an algorithm with time complexity of $f(s).|V|^c$, where $s$ is the number of substitutions, $f$ is any function and $c$ is a constant. Indeed, for $s=0$, it gives a polynomial algorithme for the \pegm problem, which is $\np$-complete.

%\begin{corollary} \pmisgm is $\wtwo$-hard when parameterized by the number of substitutions.
%\end{corollary}

\section{Using modularity}\label{sec:modules}

In this section, we introduce a variant of \pegm, where the connectivity constraint is replaced by modularity. After a quick recall on the modules properties, we justify this new variant. The problem remains $\np$-hard, however, the tools offered by the modularity allow efficient algorithms.

\subsection{Definitions and properties}

In an undirected graph $G=(V,E)$, a node $x$ \textit{separates} two nodes $u$ and $v$ iff $\{x,u\} \in E$ and $\{x,v\} \notin E$. A \textit{module} $\M$ of a graph $G$ is a set of nodes not separated by any node of $V \setminus \M$. In other words, a module $\M$ is such that $\forall x \notin \M, \forall u,v \in \M, \{x,u\} \in E \Leftrightarrow \{x,v\} \in E$~\cite{Chein1981} (see also Figure~\ref{fig:example modules}). %(for example, in Figure~\ref{fig:example tree}a), some modules are the singletons $\{v\},\forall v \in V$, the whole set of nodes $V$, or the sets $\{v_8,v_{10},v_{11}\}, \{v_2,v_3\}$). 
%One can also see a module $\M$ as a set of nodes of $V$ with an uniform relationship with the rest of the graph, \textit{i.e.} with $V \setminus \M$. 
The whole set of nodes $V$ and any singleton set $\{u\}$, where $u \in V$, are the trivial modules.

\begin{figure}[ht!]
\begin{center}
\begin{tikzpicture}[scale=.55,>=stealth,shorten <=.5pt,shorten >=.5pt]
%vertex nodes
	\node[vertex,draw] (1) at (2.5,0.5) {$v_1$};
	\node[vertex,draw] (2) at (4,0.5) {$v_2$};
	\node[vertex,draw] (3) at (4,2) {$v_3$};
	\node[vertex,draw] (4) at (4,-1) {$v_4$};
	\node[vertex,draw] (5) at (6,0.5) {$v_5$};
	\node[vertex,draw] (6) at (6,-1) {$v_6$};
	\node[vertex,draw] (7) at (8,0.5) {$v_7$};
	\node[vertex,draw] (8) at (8,-1) {$v_8$};
%edges
	\path[edge] 
      		(1) -- (3) -- (5) -- (4) -- (1);
	\path[edge] 
      		(2) -- (3) -- (6) -- (2);
	\path[edge] 
      		(2) -- (4) -- (6);
	\path[edge] 
		(2) -- (5);
	\path[edge] 
		(7) -- (8);
\end{tikzpicture}
\end{center}
\caption{Example of modules : the whole set of nodes $V$, the singletons $\{v\}, \forall v \in V$, the two connected component $\{v_1,v_2,v_3,v_4,v_5,v_6\}, \{v_7,v_8\}$, or the sets $\{v_3,v_4\}, \{v_5,v_6\}, \{v_1,v_2,v_5,v_6\}$.}\label{fig:example modules}
\end{figure}
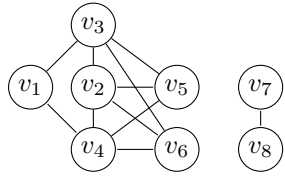

Before stating the definition of specific modules, let say that two modules $A$ and $B$  \textit{overlap} if (i) $A \cap B \neq \emptyset$, (ii) $A \setminus B \neq \emptyset$, and (iii) $B \setminus A \neq \emptyset$. According to~\cite{Chein1981}, if two modules $A$ and $B$ overlap, then $A \cap B$, $A \cup B$ and $(A \cup B) \setminus (A \cap B)$ are also modules. This allows the definition of \textit{strong modules}. A module is \textit{strong} if no other module overlaps it, otherwise it is \textit{weak}. Therefore, two strong modules are either included into the other, either of empty intersection. A module $\M \subset S$ is said \textit{maximal} for a given set of nodes $S$ (by default the set of nodes $V$) if there is no module $\M'$ s.t. $\M \subset \M' \subset S$. In other words, the only module which contains the maximal module $\M$ is $S$. 

There are three types of modules : (i) \textit{parallel}, when the subgraph induced by the nodes of the module is not connected (it is a parallel composition of its connected components), (ii) \textit{series}, when the complement of the subgraph induced by the nodes of the module is not connected (it is a series composition of the connected components of its complement), or (iii) \textit{prime}, when both the subgraph induced by the nodes of the module and its complement are connected.

The inclusion order of the maximal strong modules defines the \textit{modular tree decomposition} $\T(G)$ of $G$, which is enough to store the whole set of strong modules. The tree $\T(G)$ can be recursively built by a top-down approach, where the algorithm recurs on the graph induced by the considered strong module. The root of this tree is the set of all nodes $V$ while the leaves are the singleton sets $\{u\}$, $\forall u \in V$. Each node of $\T(G)$ got a label representing the type of the strong module, \textit{parallel}, \textit{series} or \textit{prime}. Children of an internal node $\M$ are the maximal submodules of $\M$ (\textit{i.e.} they are disjoints). Figure \ref{fig:example tree} gives an example of the construction of $\T(G)$ from a sample graph $G$.
 The modular tree decomposition can be obtained with a linear time algorithm, (\textit{e.g.} the one described in \cite{Habib2004}).
We can now introduce an essential property of $\T(G)$:

\begin{theorem} (\cite{Chein1981})
A module of $G$ is either a node of $\T(G)$, either a union of children (of depth 1) of a series or parallel node in $\T(G)$.  
\end{theorem}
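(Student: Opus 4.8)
The plan is to split on whether the given module $\M$ is strong or weak. If $\M$ is strong then, since $\T(G)$ stores exactly the strong modules (root $V$, singleton leaves, internal nodes the nontrivial strong modules), $\M$ is a node of $\T(G)$ and we are done. So assume $\M$ is weak. Since $V$ is a strong module containing $\M$ and any two strong modules are nested or disjoint, the strong modules containing $\M$ form a chain under inclusion and therefore have a unique minimal element $N$, which is a node of $\T(G)$; let $N_1,\dots,N_p$ be its children, i.e. the maximal strong submodules of $N$, which partition $N$.

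First I would show that $\M$ is a union of some of the $N_i$. The key observation is that $\M$ cannot overlap any strong module. Fix a child $N_i$ with $\M\cap N_i\neq\emptyset$ and $N_i\not\subseteq\M$. If $\M\subseteq N_i$ then $N_i\subsetneq N$ would be a strong module containing $\M$, contradicting the minimality of $N$; otherwise $\M\setminus N_i$, $N_i\setminus\M$ and $\M\cap N_i$ are all nonempty, so $\M$ overlaps the strong module $N_i$, again a contradiction. Hence each child satisfies $N_i\subseteq\M$ or $N_i\cap\M=\emptyset$, so $\M=\bigcup_{i\in I}N_i$ with $I=\{i:N_i\subseteq\M\}$. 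Moreover $|I|\neq 1$ (a single child is strong, hence a node of $\T(G)$, so $\M$ would not be weak) and $|I|\neq p$ (else $\M=N$ is strong); thus $2\le|I|\le p-1$.

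It remains to prove that $N$ is a series or parallel node. I would work with the quotient graph $G_N$ obtained from $G[N]$ by contracting each child $N_i$ to a single vertex; this is well defined because, each $N_i$ being a module, adjacency between two children is uniform. The point is that $I$ is a nontrivial module of $G_N$: for a child $N_k$ with $k\notin I$, the module property of $\M=\bigcup_{i\in I}N_i$, together with that of $N_k$, forces all vertices of $N_k$ to be simultaneously adjacent to all of $\M$ or to none, i.e. $k$ is adjacent in $G_N$ to all of $I$ or to none. Since $G[N]$ is connected iff $G_N$ is connected and, by the same argument applied to $\overline{G[N]}$, the complement $\overline{G[N]}$ is connected iff $\overline{G_N}$ is connected, the node $N$ is prime exactly when $G_N$ and $\overline{G_N}$ are both connected. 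Finally, the maximality of the children as maximal strong submodules rules out a nontrivial module of $G_N$ in the prime case: such a module would yield a module $\bigcup_{i\in I}N_i$ of $G$ lying strictly between a child $N_i$ and $N$, contradicting maximality (this is precisely the content of the cited result of Chein \textit{et al.}). Therefore $N$ cannot be prime, so it is series or parallel, which completes the proof.

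The hard part will be this last step, namely that a prime node has a prime quotient with no nontrivial module. The delicacy is that $\bigcup_{i\in I}N_i$ might a priori be \emph{weak}, so the maximality of the $N_i$ among strong modules does not apply directly; one must first promote it, via the overlap lemma, to a strong module strictly between some $N_i$ and $N$ — equivalently, show that a connected and co-connected graph built on maximal strong modules cannot be written as the union of two overlapping proper modules. Everything else reduces to the definition of a module and to the laminarity of strong modules already recalled in the excerpt.
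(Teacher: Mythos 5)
The paper offers no proof of this statement at all --- it is quoted verbatim from Chein, Habib and Maurer --- so there is no in-paper argument to compare yours against; what follows judges your reconstruction on its own terms. Most of it is correct and is the classical route: passing to the minimal strong module $N\supseteq\M$ (well defined by laminarity), noting that $\M$, being a module, cannot overlap any strong module and hence is a union $\bigcup_{i\in I}N_i$ of children of $N$ with $2\le |I|\le p-1$, and observing that $I$ is then a nontrivial module of the quotient $G_N$. All of that is sound and essentially forced.

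The gap is exactly where you place it, and as submitted it is a real one: you never actually prove that a prime node has a prime quotient, and the sentence ``the maximality of the children as maximal strong submodules rules out a nontrivial module of $G_N$'' is not usable as stated, since maximality among \emph{strong} modules says nothing about a weak module sitting strictly between a child and $N$. The missing lemma is: if $B$ and $N\setminus B$ are both nonempty modules of $G[N]$, then the bipartite adjacency between them is constant, so $G[N]$ or $\overline{G[N]}$ is disconnected and $N$ is not prime. With it, your promotion goes through cleanly. Assume $N$ prime and let $M^*$ be a maximal proper module of $G[N]$ containing $\M$ (modules of $G$ inside $N$ coincide with modules of $G[N]$ because $N$ is a module; and no module of $G$ reaching outside $N$ can overlap $M^*$, as it would then overlap the strong module $N$). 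If some module $A\subseteq N$ overlapped $M^*$, maximality would force $A\cup M^*=N$, and the facts recalled in the paper give that $A\cap M^*$ and $(A\cup M^*)\setminus(A\cap M^*)=N\setminus(A\cap M^*)$ are both modules --- a complementary pair of nonempty proper modules of $G[N]$, contradicting primality by the lemma. Hence $M^*$ is strong, and $N_{i_0}\subset M^*\subset N$ (both inclusions strict, for any $i_0\in I$) contradicts the maximality of the children among proper strong submodules. So your plan is the right one, but the argument is incomplete until this lemma and the maximality step are written out.
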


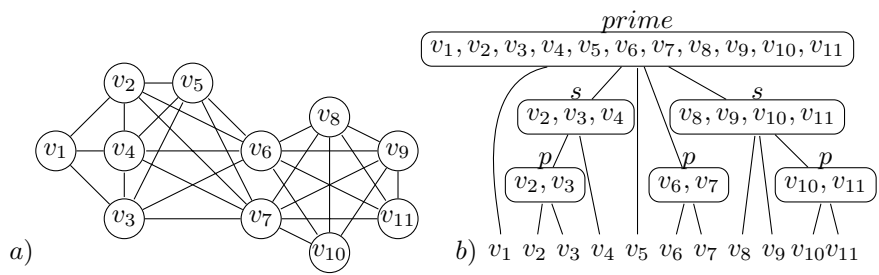
\begin{figure}[ht!]
\begin{center}
\begin{tikzpicture}[scale=.45,>=stealth,shorten <=.5pt,shorten >=.5pt]
%vertex nodes
	\node[] () at (-1,-3) {$a)$};
	\node[bigvertex,draw] (1) at (0,0) {$v_1$};
	\node[bigvertex,draw] (2) at (2,2) {$v_2$};
	\node[bigvertex,draw] (3) at (2,-2) {$v_3$};
	\node[bigvertex,draw] (4) at (2,0) {$v_4$};
	\node[bigvertex,draw] (5) at (4,2) {$v_5$};
	\node[bigvertex,draw] (6) at (6,0) {$v_6$};
	\node[bigvertex,draw] (7) at (6,-2) {$v_7$};
	\node[bigvertex,draw] (8) at (8,1) {$v_8$};
	\node[bigvertex,draw] (9) at (10,0) {$v_9$};
	\node[bigvertex,draw] (10) at (8,-3) {$v_{10}$};
	\node[bigvertex,draw] (11) at (10,-2) {$v_{11}$};
%edges
	\path[edge] 
      		(1) -- (2) -- (4) -- (1) -- (3) -- (4);
	\path[edge] 
      		(2) -- (5) -- (4) -- (7) -- (5) -- (3);
	\path[edge] 
      		(2) -- (6) -- (5);
	\path[edge] 
		(2) -- (7) -- (3) -- (6) -- (4);
	\path[edge] 
		(6) -- (8) -- (9) -- (11) -- (6);
	\path[edge]
		(6) -- (9) -- (10) -- (6);
	\path[edge]
		(7) -- (9);
	\path[edge]
		(7) -- (10) -- (8) -- (7) -- (11) -- (8);
\begin{scope}[xshift=1cm]
%tree nodes
	\node[] () at (11,-3) {$b)$};
	%node V
	\node[] () at (16,3.8) {$prime$};
	\node[rect,draw] (tV) at (16,3)
		{$v_1,v_2,v_3,v_4,v_5,v_6,v_7,v_8,v_9,v_{10},v_{11}$};
	%node 2 3 4
	\node[] () at (14.2,1.7) {$s$};
	\node[rect,draw] (t234) at (14.2,1) {$v_2,v_3,v_4$};
	%node 8 9 10 11
	\node[] () at (19.5,1.7) {$s$};
	\node[rect,draw] (t891011) at (19.5,1) {$v_8,v_9,v_{10},v_{11}$};
	%node 2 3
	\node[] () at (13.3,-0.3) {$p$};
	\node[rect,draw] (t23) at (13.3,-1) {$v_2,v_3$};
	%node 6 7
	\node[] () at (17.5,-0.3) {$p$};
	\node[rect,draw] (t67) at (17.5,-1) {$v_6,v_7$};
	%node 10 11
	\node[] () at (21.5,-0.3) {$p$};
	\node[rect,draw] (t1011) at (21.5,-1) {$v_{10},v_{11}$};
	\node[rect] (t1) at (12,-3) 
		{$v_{1}$};
	\node[rect] (t2) at (13,-3) 
			{$v_{2}$};
	\node[rect] (t3) at (14,-3) 
			{$v_{3}$};
	\node[rect] (t4) at (15,-3) 
			{$v_{4}$};
	\node[rect] (t5) at (16,-3) 
			{$v_{5}$};
	\node[rect] (t6) at (17,-3) 
			{$v_{6}$};
	\node[rect] (t7) at (18,-3) 
			{$v_{7}$};
	\node[rect] (t8) at (19,-3) 
			{$v_{8}$};
	\node[rect] (t9) at (20,-3) 
			{$v_{9}$};
	\node[rect] (t10) at (21,-3) 
			{$v_{10}$};
	\node[rect] (t11) at (22,-3) 
			{$v_{11}$};
%tree edges
	\path[edge]
		(tV) .. controls +(-5,-1) and +(0,1)  .. (t1);
	\path[edge]
		(tV) -- (t234) -- (t23) -- (t2);
	\path[edge]
		(t23) -- (t3);
	\path[edge]
		(t234) -- (t4);
	\path[edge]
		(tV) -- (t5);
	\path[edge]
		(tV) -- (t67) -- (t6);
	\path[edge]
		(t67) -- (t7);
	\path[edge]
		(tV) -- (t891011) -- (t8);
	\path[edge]
		(t891011) -- (t9);
	\path[edge]
		(t891011) -- (t1011) -- (t10);
	\path[edge]
		(t1011) -- (t11);
\end{scope}
\end{tikzpicture}
\end{center}
\caption{A sample graph in a) and the corresponding modular tree decomposition in b). Nodes of the tree are either series ($s$), parallel ($p$), prime ($prime$) or leaves.
}\label{fig:example tree}
\end{figure}

%(! attention, peut etre faux. La vrai def est : tout module de G est soit un module premier fort, soit un module parallele ou serie qui est l'union de modules forts freres)

One can see strong modules as generators of the modules of $G$: the set of all modules of $G$ can be obtained from the tree $\T(G)$. A crucial point to note is that there is potentially an exponential number of modules in a graph (\textit{e.g.}, the clique $K_n$ has $2^n$ modules), but the size of $\T(G)$ is $\bigo(n)$ (more precisely, $\T(G)$ has less than $2n$ nodes since there are $n$ leaves and no node with exactly one child). Therefore, the exponential-sized family of modules of $G$ can be represented by the linear sized tree $\T(G)$.

\subsection{When modules join \GM}

In the following, we investigate the algorithmic issues of other topology-free definition, when replacing the connectedness demand by modularity. Following definition of \pegm, we introduce \pmogm.

\PbDec{\pmogm}{A graph $G = (V,E)$, a set of colors $C$, a function $col : V \rightarrow C$, a multiset $M$ on $C$ of size $k$.}{Does there exist a subset $V' \subseteq V$ such that (i) $V'$ is a module of $G$ and (ii) $col(V')=M$.}

This definition links the modularity demand with the motif research. The module definition implies that all the nodes in this module have a uniform relation with the set of all the other nodes outside of the module. The module nodes are indistinguishable from the outside, they are acting similarly with the other nodes of the graph.

Authors of \cite{Alm2003} define a biological module as a set of elements having a separable function from the rest of the graph. Similarly, authors of~\cite{Ravasz2002} describe a biological module as a set of some elements with an identifiable task, separable from the functions of the other biological modules. Moreover, it is shown in~\cite{Constanzo2010} that genes with a similar neighborhood have chances to be in a same biological process. It is thus possible that set of nodes in an algorithmic module of a graph representing a biological network have a common biological function. Also note that authors of~\cite{Segal2003} describe modules in gene regulatory networks as groups of genes which obey to the same regulations, and consequently, as groups which members cannot be distinguished from the rest of the network.

Moreover, apart of using modules in a slightly different goal (in order to predict more cleverly results of PPI), Gagneur \textit{et al.}~\cite{Gagneur2004} note that modules of a graph can join biological modules, and consider modular decomposition as a general tool for biological network analysis under different representations (oriented graphs, hyper-graphs...).

However, there is no clear definition of what is (or should be) a biological module in a network~\cite{Alm2003}. We thus claim that the approach using modular decomposition is complementary to the previous definitions of biological modules (\textit{e.g.} connected occurrences or compact occurrences).

%vertices of the module are indistinguishable from the outside. 
\subsection{Difficulty of the problem}

Unfortunately, \pmogm is $\np$-hard, even under strong restrictions, \textit{i.e.} when $G$ is a collection of paths of size three, and when the motif is colorful. Observe that under the same conditions, \pegm is trivially polynomial-time solvable.

\begin{proposition} \pmogm is $\np$-complete even if $G$ is a collection of paths of size 3 and $M$ is colorful.
\end{proposition}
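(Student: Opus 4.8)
The plan is to show membership in $\np$ and then prove hardness by a reduction from \pxtc (Exact Cover by 3-Sets), which fits the ``paths of size three'' restriction naturally. Membership in $\np$ is immediate: given a candidate $V'$, one checks in polynomial time both that $col(V')=M$ as multisets and that $V'$ is a module, the latter by verifying directly that no vertex outside $V'$ separates two vertices of $V'$.

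The crucial preliminary step is to characterize the modules of a disjoint union of $3$-vertex paths. First I would observe that, within a single path $a-b-c$, the only nontrivial module is the pair of endpoints $\{a,c\}$ (they are twins, both adjacent to $b$), while $\{a,b\}$ and $\{b,c\}$ are not modules; hence every module contained in a single path has at most three vertices. Second, I would show that any module $V'$ meeting two or more of the path-components must be a union of \emph{entire} components: if $V'$ intersected a component $P$ without containing all of it, then by connectivity of $P$ some vertex $x\in P\setminus V'$ is adjacent to a vertex $u\in V'\cap P$, and taking any $v\in V'$ from a different component gives $\{x,u\}\in E$ but $\{x,v\}\notin E$, contradicting modularity. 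Conversely, any union of complete components is trivially a module (a vertex outside it lies in another component and is therefore adjacent to none of its vertices). Thus, apart from within-path modules of size at most three, the modules of $G$ are exactly the unions of complete paths.

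Given an instance $(X,\mathcal{C})$ of \pxtc with $|X|=3q$, I would build one $3$-vertex path per set $S\in\mathcal{C}$, assigning to its three vertices the three colors corresponding to the three elements of $S$ (the order along the path being irrelevant), introduce one color per element of $X$, and set $M$ to be the set of all $|X|=3q$ element-colors, so that $M$ is colorful. For $q\geq 2$ we have $|M|=3q>3$, so by the characterization any solution $V'$ must be a union of complete paths; since $M$ is colorful, $|V'|=|M|=3q$, whence $V'$ is a union of exactly $q$ paths with pairwise distinct colors, i.e. the corresponding $q$ sets are pairwise disjoint and cover $X$ --- an exact cover. Conversely, an exact cover consists of $q$ pairwise disjoint sets whose paths union to a module with colors exactly $M$. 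This yields the equivalence, and the degenerate case $q=1$ is handled identically (a single complete path is both the whole within-path module and an exact cover of size one).

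The main obstacle, and the heart of the argument, is the module characterization of the second step: once it is clear that the only ``large'' modules available are unions of complete paths, the colorfulness of $M$ does the remaining work by forcing the chosen paths to be element-disjoint, which is precisely the exactness of the cover. Everything else is routine verification that the construction is polynomial.
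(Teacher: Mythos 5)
Your proposal is correct and follows essentially the same route as the paper: a reduction from \pxtc with one $3$-vertex path per set, element colors as the colorful motif, and the key structural observation that any sufficiently large module of a disjoint union of $3$-vertex paths must be a union of entire paths, so that colorfulness forces the chosen sets to be pairwise disjoint. Your module characterization is somewhat more carefully spelled out than the paper's brief contradiction argument, but the underlying idea is identical.
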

\begin{proof} \pmogm is in $\np$ since given a set $V' \subseteq V$, one can check in polynomial-time if $V'$ is a module and if the colors of $C$ appears exactly once if $V'$. To prove its hardness, we propose a reduction from \textsc{Exact Cover by 3-Sets} (\pxtc). This special case of \textsc{Set Cover} is known to be $\np$-complete. Recall that \pxtc is stated as follows: 
%Given a set $X = \{x_1,x_2,\dots,x_{3q}\}$ and a collection $\S = \{S_1,\dots,S_{|\S|}\}$ of 3-elements subsets of $X$, does $\S$ contains a subcollection $\S' \subseteq S$ such that each element of $X$ occurs in exactly one element of $S'$. 
\PbDec{\pxtc}{A set $X = \{x_1,x_2,\dots,x_{3q}\}$ and a collection $\S = \{S_1,\dots,S_{|\S|}\}$ of 3-elements subsets of $X$.}{Does $\S$ contains a subcollection $\T \subseteq S$ such that each element of $X$ occurs in exactly one element of $T$.}
Size of $X$ must be a multiple of three since a solution is a set of triplets where each element of $X$ must appears exactly once.

Let us now describe the construction of an instance $\I'=(G,C,col)$ of \pmogm from an arbitrary instance $\I = (X,\S)$ of \pxtc (see also Figure~\ref{fig:X3Cconstruction}). The graph $G=(V,E)$ is built as follows: $V = \{ v_i^j : 1 \leq i \leq |\S|, x_j \in S_i \}$,  $E = \{ \{v_i^1,v_i^2\} \cup \{v_i^2,v_i^3\} : 1 \leq i \leq |\S| \}$. 	Informally speaking, $G$ is a collection of $|\S|$ paths with three nodes (recall that for each $1 \leq i \leq |\S|, |S_i| = 3$).

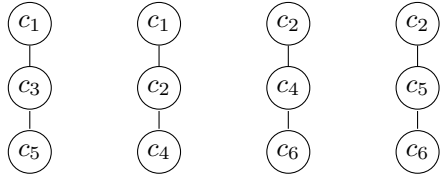
\begin{figure}[ht!]
\begin{center}
\begin{tikzpicture}[scale=.85,>=stealth,shorten <=.5pt,shorten >=.5pt]
%vertex nodes
	\node[vertex,draw] (11) at (0,0) {$c_1$};
	\node[vertex,draw] (13) at (0,-1) {$c_3$};
	\node[vertex,draw] (15) at (0,-2) {$c_5$};
	\node[vertex,draw] (21) at (2,0) {$c_1$};
	\node[vertex,draw] (22) at (2,-1) {$c_2$};
	\node[vertex,draw] (24) at (2,-2) {$c_4$};
	\node[vertex,draw] (32) at (4,0) {$c_2$};
	\node[vertex,draw] (34) at (4,-1) {$c_4$};
	\node[vertex,draw] (36) at (4,-2) {$c_6$};
	\node[vertex,draw] (42) at (6,0) {$c_2$};
	\node[vertex,draw] (45) at (6,-1) {$c_5$};
	\node[vertex,draw] (46) at (6,-2) {$c_6$};
	
%edges
	\path[draw] 
      		(11) -- (13) -- (15);
	\path[draw] 
      		(21) -- (22) -- (24);
	\path[draw] 
      		(32) -- (34) -- (36);
	\path[draw] 
      		(42) -- (45) -- (46);
\end{tikzpicture}
\end{center}
\caption{The graph $G$ built from $X = \{x_1,x_2, \dots ,x_6\}$ (thus with $q=2$) and $\S = \{\{x_1,x_3,x_5\},\{x_1,x_2,x_4\},\{x_2,x_4,x_6\},\{x_2,x_5,x_6\}\}$ (only the colors of the node are written). By construction, the set of colors asked in any solution is $C = \{c_1,c_2, \dots ,c_6\}$.}\label{fig:X3Cconstruction}
\end{figure}

The set of colors is $C = \{c_i : 1 \leq i \leq |X|\}$. The coloration of $G$ is such that $col(v_i^j) = c_j$. In other words, each node get the color of the represented element of $X$. We also consider the colorful motif as $M=C$. This construction is clearly done in polynomial-time in regards of $\I$.

Let us now prove that if there is a solution for an instance $\I$ of \pxtc, then there is solution for the instance $\I'$ of \pmogm. Given a solution $\T \subseteq \S$ for $\I$, a solution $V'$ for $\I'$ is built as follows: $V' = \{v_i^j : S_i \in \T, x_j \in S_i \}$. Informally speaking, the solution contains the set of paths corresponding to the chosen triplets in the solution for \pxtc. The set $V'$ is a module, and by definition of a solution for $\I$, each color of $\{c_i : 1 \leq i \leq |X|\}$ appears exactly once in $V'$. 

Conversely, let us now prove that there is a solution for the instance $\I$ of \pxtc if there is a solution for the instance $\I'$ of \pmogm. First observe that since $q \geq 1$, then $|X| \geq 3$ and therefore $C \geq 3$. A module of size greater or equal than three in a collection of paths of size three must be a union of paths of size three. Indeed, suppose by contradiction that there is a module $\M$ of size greater than three which is not a union of paths of size three. There is thus a node $u \in \M$ such that at least one of its neighbor $v \in N(u)$ is not in $\M$ and $v$ separates $u$ from another node of $\M$. Therefore, $\M$ is not a module. The solution is built as follows: $\T = \{S_i : v_i^j \in V' \}$. Since the solution $V'$ is a union of paths of size three, each triplet $S_i$ is either completely chosen in the solution $\T$, either absent. Moreover, since $V'$ is a solution, colors of $V'$ appears exactly once. Therefore, each element of $X$ appears exactly once in $\T$.\qed
\end{proof}

\subsection{Algorithms for the decision problem}

Even if the problem is hard under strong restrictions, the modular decomposition tree is a useful structure to design efficient algorithms. More precisely, we show in the sequel that \pmogm is in the $\fpt$ class when the parameter is the size of the solution. %, with a better complexity than for \pegm when the motif is a multiset. 
As a corollary, we show that the problem can be solved in polynomial time if the number of colors is bounded, or more generally, that \pmogm is in $\fpt$ when parameterized by $(k,|C|)$. Moreover, \pmogm is still in the $\fpt$ class if a set of colors is associated to each node of the graph.

Let us first observe that asking for a strong module instead of any module in the definition of \pmogm leads to a trivial linear algorithm. Indeed, one can just browse $\T(G)$ and test if the set of colors for each strong module is equal to the motif.

Let us now show an algorithm with a time complexity of $\bigos(2^k)$, where $k$ is the size of the solution, for \pmogm, even if the motif is a multiset. %To the best of our knowledge, we do not know an algorithm with a time complexity lower than $\bigos(4^k)$ for \pegm when the motif is a multiset.

\begin{proposition}\label{prop:mmogm}
There is a parameterized algorithm for \pmogm with a time complexity of $\bigo(2^k|V|^2)$ and a space complexity of $\bigo(2^k|V|)$, where $k$ is the size of the motif and of the solution.
\end{proposition}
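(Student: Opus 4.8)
The plan is to exploit the structural characterization recalled above: every module of $G$ is either a node of $\T(G)$ (a strong module) or a union of children of a single series or parallel node of $\T(G)$. Since $\T(G)$ can be computed in linear time (e.g.\ \cite{Habib2004}) and has $\bigo(|V|)$ nodes, and since any solution $V'$ necessarily satisfies $|V'|=|M|=k$, I would first discard every node of $\T(G)$ whose node set has size larger than $k$ or whose color multiset is not a sub-multiset of $M$: such nodes can never occur in a solution. I would then compute, bottom-up along $\T(G)$, the (capped) color multiset of each remaining node.

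\textbf{Case 1 (strong modules).} For each node $\nu$ of $\T(G)$ I simply test whether $col(\nu)=M$, and if so report $\nu$. This already covers the trivial modules (the whole graph and the singletons) and, crucially, the prime nodes, for which no proper union of children is a module, so their only module candidate is the strong module itself.

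\textbf{Case 2 (knapsack over children).} The remaining modules are unions of at least two children of some series or parallel node. Fix such a node $\nu$ with children $C_1,\dots,C_t$. Because $col(V')=M$, any child contained in a solution has its color multiset included in $M$, so I keep only those children. Deciding whether some sub-collection of the $C_j$ has combined color multiset exactly $M$ is then a $0/1$ knapsack whose weights are color multisets. The observation that yields the $2^k$ bound is that any partial union extendable to $M$ is itself a sub-multiset of $M$, and the number of sub-multisets of $M$ is $\prod_c (m_c+1)\le 2^{\sum_c m_c}=2^k$, where $m_c$ is the multiplicity of color $c$ in $M$. I would maintain a reachability table indexed by the sub-multisets of $M$, and process children one at a time: for each child $C_j$ and each sub-multiset $S$ reachable \emph{before} $C_j$ is considered, I mark $S\uplus col(C_j)$ reachable whenever it is still a sub-multiset of $M$. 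This per-child snapshot enforces that each child is used at most once. A solution exists at $\nu$ iff $M$ becomes reachable, and storing for each state the last child added permits reconstruction of the actual node set.

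\textbf{Complexity and main obstacle.} Each sub-multiset union-and-containment test costs $\bigo(k)$ on a mixed-radix encoding over the at most $k$ distinct colors of $M$; each node $\nu$ contributes $\bigo(t_\nu\cdot 2^k)$ work, and $\sum_\nu t_\nu=\bigo(|V|)$ over all of $\T(G)$, so with $k\le|V|$ this gives the claimed $\bigo(2^k|V|^2)$ time, while keeping one candidate set of size at most $k$ per state gives $\bigo(2^k|V|)$ space. I expect the point requiring the most care to be the correctness of Case 2 — namely that restricting the dynamic program to sub-multisets of $M$ discards no solution and that the $0/1$ discipline is respected — together with checking that Cases 1 and 2 jointly enumerate \emph{all} modules, which is precisely what the structural characterization guarantees.
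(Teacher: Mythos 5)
Your proposal is correct and follows essentially the same route as the paper's proof: compute $\T(G)$, test each strong module directly, and for series/parallel nodes run a subset-sum style dynamic program over the children indexed by the at most $2^k$ sub-multisets of $M$. The only differences are cosmetic (a forward-marking formulation of the DP instead of the paper's recurrence $D(i,M')=D(i-1,M')\vee D(i-1,M'\setminus col(\child(\M)[i]))$, and a slightly sharper accounting via $\sum_\nu t_\nu=\bigo(|V|)$), so no further comparison is needed.
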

\begin{proof}
%We denote a multiset where all terms are 0 by $\{0,0,\dots,0\}$. 
Since $|M|=k$, observe that there are at most $2^k$ different multisets $M'$ such that $M' \subseteq M$. 
%Recall that the motif $M$ is build over the colors in $C$, and that $occ_c$ denote the number of occurences of color $c$ in $M$. We also have $\sum_{c \in C}occ_{M}(c) = k$. Given two multiset $A$ and $A'$, we denote their inclusion by $A' \subseteq A$. We also denote by $A \setminus A'$ the difference term by term of $A$ by $A'$. Note that there are $2^k$ different multisets $M'$ s.t. $M' \subseteq M$, where $M$ is the motif. 
We first build in polynomial time the modular tree decomposition $\T(G)$ from $G$. We repeat the following algorithm for each node $\M$ of $\T(G)$.

We start by testing if the set of the colors of $\M$ is exactly equal to the motif $M$. If it is the case, the algorithm terminates. Otherwise, if $\M$ is a series or parallel node, a module can be a union of its children. Given an arbitrary order on its $t$ children, denote by $\child(\M)[i]$ the $i$-th child of $\M$. We then delete all children $\M'$ of $\M$ such that $col(\M') \not \subset M$, where $col(\M')$ is the set of colors of the nodes of $\M'$. Indeed, such a child cannot be in a solution considering $M$. We note that the set of colors for each child correspond to a multiset $M' \subseteq M$. 

Since any union of children of $\M$ is a module of $G$, it is thus a potential solution. We propose to test by dynamic programming if such union corresponds to a solution for $M$. We build a table $D(i,M')$, for $0 \leq i \leq t$ and $M' \subseteq M$. Therefore, $D$ has $t+1$ lines and $2^k$ columns. We fill this table as follows:
\begin{align*}
D(0,M') &= True  \text{ if } M' = \{0, \dots ,0\} \text{, $False$ otherwise,}\\
D(i,M') &= D(i-1,M') \vee D(i-1,M' \setminus col(\child(\M)[i])) \text{ if } i \leq t, M' \subseteq M.
\end{align*}
The algorithm returns $True$ iff $D(t,M) = True$. Informally speaking, the first part of the computation of $D(i,M')$ ignores the $i$-th child of $\M$ while the second part add this child into the potential solution.

The time and space complexities of the dynamic programming are $\bigo(2^k|V|)$ since $D$ is of size at most $2^k|V|$ and the computation time for each element is constant. Therefore, since the dynamic programming is launched in the worst case on each node of $\T(G)$, the whole time complexity is $\bigo(2^k|V|^2)$.

It remains to show the correctness of the dynamic programming. Suppose the existence of a module $\M'$ such that $col(\M') = M$. Then, either $\M'$ is a strong module represented in a node of $\T(G)$, or it is a union of $j$ modules $\M'_1,\M'_2,\dots,\M'_j$, children of a module $\M$. Therefore, $M \setminus \{ \{ col(\M'_1) \cup \{col(\M'_2)\} \cup \dots \cup \{col(\M'_j)\} \}\} = \{0,0,\dots,0\}$, then $D(t,\M) = True$. 

Conversely, if there is a module $\M$ such that $D(t,\M) = True$, then there is a union of the children of $\M$ such that the set of colors of these children is equal to $M$. \qed
\end{proof}

\begin{corollary} 
%If the number of colors in the motif is bounded, then \pmogm is polynomial time solvable.
\pmogm is in $\fpt$ when parameterized by $(k,|C|)$.
\end{corollary}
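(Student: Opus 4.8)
The plan is to revisit the dynamic-programming algorithm of Proposition~\ref{prop:mmogm} and to observe that its only super-polynomial factor, the number of columns $2^k$ of the table $D$, is really the number of \emph{distinct sub-multisets} of $M$, which can be bounded more tightly by a function involving $|C|$. First I would count these sub-multisets. Since $M$ is a multiset over the colour set $C$, it involves at most $|C|$ distinct colours, and each of them occurs at most $k = |M|$ times in $M$. A sub-multiset $M' \subseteq M$ is completely determined by choosing, independently for each colour, a multiplicity between $0$ and its multiplicity in $M$; hence the number of distinct sub-multisets of $M$ is at most $(k+1)^{|C|}$.

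Next I would index the columns of the table $D$ of Proposition~\ref{prop:mmogm} by these distinct sub-multisets rather than by all subsets of a $k$-element ground set. The recurrence
\[
D(i,M') = D(i-1,M') \vee D(i-1,M' \setminus col(\child(\M)[i]))
\]
only ever refers to sub-multisets of $M$, so neither the recurrence nor the correctness argument is affected; only the number of columns changes. The table now has at most $(k+1)^{|C|}$ columns, so, repeating the computation over the $\bigo(|V|)$ nodes of $\T(G)$, the whole algorithm runs in $\bigo((k+1)^{|C|}\,|V|^2)$ time. Since $(k+1)^{|C|}$ is a function of the parameters $(k,|C|)$ alone, this is of the form $f(k,|C|)\cdot \mathrm{poly}(|V|)$, which places \pmogm in $\fpt$ for the combined parameter.

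I do not expect a genuine obstacle: the content of the corollary is precisely the observation that the number of relevant sub-multisets is $(k+1)^{|C|}$ rather than $2^k$. The only point requiring a little care is to check that restricting the columns of $D$ to actual sub-multisets of $M$ preserves correctness, which is immediate since the recurrence never produces a multiset outside this family. As a by-product, because any solution has size $k \le |V|$, whenever $|C|$ is bounded by a constant the factor $(k+1)^{|C|}$ is polynomial in $|V|$, recovering the announced polynomial-time algorithm for a bounded number of colours.
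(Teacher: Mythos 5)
Your proposal is correct and follows essentially the same route as the paper: both observe that the $2^k$ columns of the table $D$ in Proposition~\ref{prop:mmogm} can be replaced by the distinct sub-multisets of $M$, of which there are at most $(k+1)^{|C|}$ (the paper states the slightly looser bound $k^{|C|}$), yielding an $f(k,|C|)\cdot\mathrm{poly}(|V|)$ running time. Your count is in fact the more careful one, since each colour admits a multiplicity between $0$ and its occurrence number in $M$.
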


\begin{proof}
Note that, by definition of the motif $M$, for each color $c \in C$, %$occ_M(c) \leq k$. 
the number of occurrences of $c$ in $M$ is at most $k$.
Thus, the number of multisets $M'$ such that $M' \subseteq M$ is less than $k^{|C|}$. The time complexity of the algorithm in Proposition~\ref{prop:mmogm} is bounded by $\bigo(k^{|C|}|V|^2)$. %, which is polynomial if the number of colors is bounded. \qed
\end{proof}

This corollary, implying a polynomial-time algorithm when the number of colors is a constant, is quite surprising and shows a fundamental difference with \pegm. Indeed, recall that this last is $\np$-complete, even when the motif is built over two different colors~\cite{Fellows2007}. 

Let us now show that even when a set of colors is associated to each node of the graph, the problem is still in the $\fpt$ class. It is indeed biologically relevant to consider many functions for a same reaction in a metabolic network or to consider more than one homology for a protein in a PPI network~\cite{Lacroix2006,Betzler2008}. A version of \pegm with a set of colors for each graph node as been defined, and thus, we can introduce the analogous problem \plmogm.

\PbDec{\plmogm}{A graph $G=(V,E)$, an integer $k$, a set of colors $C$, a multiset $M$ over $C$, a function $col : V \to 2^C$ giving a set of colors for each node of $V$.}{Does there exist a subset $V' \subseteq V$ such that (i) $|V'| = k$, (ii) $V'$ is a module of $G$ and (iii) there is a bijection $f : V' \to M$ such that $\forall v \in V', f(v) \in col(v)$.}

\begin{proposition}\label{prop:module brute force} \plmogm is in the $\fpt$ class.
\end{proposition}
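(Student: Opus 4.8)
The plan is to combine the structural characterization of modules recalled above (\cite{Chein1981}) with a matching-based feasibility test, so as to extend the dynamic program of Proposition~\ref{prop:mmogm} to the list-colored setting. First I would compute the modular decomposition tree $\T(G)$ in linear time. By that characterization, every module of $G$ is either a node of $\T(G)$ or a union of children of a series or parallel node, so a solution can be sought within these two families only. The one genuinely new ingredient with respect to \pmogm is condition~(iii): once the $k$ nodes of a candidate module $V'$ are fixed, deciding whether there is a bijection $f\colon V' \to M$ with $f(v)\in col(v)$ is exactly a perfect bipartite matching between $V'$ and the $k$ color-slots of $M$, which is solvable in polynomial time. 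The tree-node family is then immediate: there are fewer than $2|V|$ such nodes, and for each node $\M$ with $|\M| = k$ I run a single bipartite matching against $M$.

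The core case is a union of children of a series or parallel node $\M$ with children $C_1,\dots,C_t$, where a solution chooses a subset $S$ with $\sum_{i\in S}|C_i| = k$ whose union matches $M$. The observation that keeps this tractable is that, since the children are pairwise disjoint and $f$ is onto $M$, any global matching splits into independent per-child matchings: the selected children partition $M$ into sub-multisets $M_i$ with $C_i$ matched bijectively onto $M_i$, and conversely any such partition reassembles into a global matching. I would therefore keep a table $D(i,M')$, now read as ``some subset of $C_1,\dots,C_i$ can be matched bijectively onto exactly the sub-multiset $M'$'', with the transition $D(i,M') = D(i-1,M') \vee \bigvee_{M''} D(i-1, M'\setminus M'')$, where $M''$ ranges over the sub-multisets of $M'$ onto which $C_i$ admits a list-respecting bijection, and I answer positively as soon as some series or parallel node satisfies $D(t,M)$. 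Each pair $(C_i, M'')$ is tested by one bipartite matching, there are at most $2^k$ sub-multisets $M'$, and the total number of children over all internal nodes is $\bigo(|V|)$; hence the whole procedure runs in $g(k)\cdot\mathrm{poly}(|V|)$ time (one checks $g(k)=\bigo(4^k)$ suffices) and \plmogm lies in $\fpt$.

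The main obstacle I expect is not algorithmic but lies in justifying the decomposition step. I must argue that taking each selected child in its entirety is without loss of generality --- which is precisely what the module characterization guarantees, since no module ever splits a child of a series or parallel node --- and that global list-matchability is equivalent to the existence of a partition of $M$ into per-child matchable pieces. Minor care is also needed to discard the children with $|C_i| > k$ and to ensure that the boundary situations (the whole node $\M$ taken as the full union, and tree nodes sitting under prime nodes) are covered, the latter being handled by the separate tree-node scan described above.
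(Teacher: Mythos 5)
Your argument is correct, but it takes a genuinely different route from the paper's. The paper's proof of Proposition~\ref{prop:module brute force} is a pure brute force layered on the same structural facts: for a series or parallel node it groups children by the multiset of color-lists they carry, argues that at most $(2^c)^k$ distinct child types exist and that each type is useful at most $k$ times, hence at most $k(2^c)^k$ children need be retained; it then enumerates all unions of at most $k$ of them (bounded by $(k(2^c)^k)^{k+1}$) and, for each union, tries all $c^k$ assignments of colors to nodes, with small tree nodes handled by the same exhaustive assignment. This yields a bound of order $2^{\bigo(k^3)}\cdot\mathrm{poly}(|V|)$. You instead lift the sub-multiset dynamic program of Proposition~\ref{prop:mmogm} to the list-colored setting, replacing the single subtraction $M'\setminus col(\child(\M)[i])$ by a disjunction over all sub-multisets $M''$ onto which the $i$-th child admits a list-respecting bijection, each such test being a polynomial-time bipartite perfect matching. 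The decomposition step you single out as the main obstacle --- that a global bijection onto $M$ splits into independent per-child bijections onto a partition of $M$, because the children of a series or parallel node are pairwise disjoint and are never split by a module --- is exactly the right justification and is all that correctness requires; your separate scan of the tree nodes themselves covers the strong modules (including those under prime nodes). What your approach buys is a much better running time, $\bigos(4^k)$ (and $\bigos(3^k)$ with a careful enumeration of the pairs $M''\subseteq M'$), which incidentally addresses the paper's own closing remark that the complexity of its algorithm for Proposition~\ref{prop:module brute force} is not satisfying and could likely be improved.
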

\begin{proof}

We first build the modular tree decomposition $\T(G)$ from $G$. We repeat the following algorithm for each node $\M$ of $\T(G)$.

If $\M$ has less than $k$ nodes, we look for a bijection between the colors of $\M$ and $M$. To do so, we try all the possible combinations. In the worst case, there are $c^k$ such combinations, where $c$ is the number of different colors in $M$ (thus $c \leq k$). 

In the following, we thus can consider $\M$ with more than $k$ nodes. If it is a prime node, we can ignore it since this node cannot be a solution for a motif of size $k$. Otherwise, it is a series or parallel node, and a union of the children can be a solution. Let us now show that the number of possible solutions is exponential only with $k$, and it is thus possible to try all the possibilities.

To do so, we first give a bound to the number of children for $\M$. There are at most $k$ nodes in each child of $\M$ (otherwise, this child cannot be in a solution). In each child of $\M$, there are at most $2^c$ different sets of colors associated to each node. Since there are at most $k$ nodes in each child of $\M$, there are at most $(2^c)^k$ different children of $\M$. A same child of $\M$ cannot occurs more than $k$ times (otherwise, the next occurrences cannot be in a solution for a motif of size $k$). Therefore, there are at most $k(2^c)^k$ children to consider for $\M$. 

We bounded the number of children for $\M$. We now choose the potential union of children of $\M$ in the solution -- we must choose $i$ among the $k(2^c)^k$ children, where $i$ goes from $1$ to $k$. This is bounded by $(k(2^c)^k)^{k+1}$. Finally, for each union of chosen children, there are $c$ possible colors for the nodes (there are at most $k$ of them), which lead to at most $c^k$ tests. The overall complexity of the algorithm is thus exponential only in $k$.\qed
%peut faire mieux que $c^k avec un algo de flot, entre une source derriere le motif et une target derriere els patates. Le truc est de mettre une capacité de TAILLE PATATE dans le lien entre un sommet derreiere la patate et la target.
\end{proof}

\subsection{Open problems}

\paragraph{Around the module properties}
%
%
%
%
%Quelques travaux ont déjà tenté d'utiliser les modules dans les réseaux biologiques. Une interprétation de leur comportement est faite dans \cite{Gagneur2004,Del2009}.
%
%\begin{itemize}
% \item Un module fort $\M$ étiqueté parallèle se comporte comme une alternative entre ses fils. Chacun d'entre eux peut remplacer $\M$ avec succès dans n'importe quelle opération impliquant $\M$.
% \item Un module fort $\M$ étiqueté série demande que tous ses fils soient impliqués pour le remplacer dans une opération impliquant $\M$.
% \item Il n'y a pas d'interprétation biologique raisonnable pour un module fort étiqueté premier. En effet, un tel sommet peut correspondre à des graphes très différents les uns des autres.
%\end{itemize}
%
%C'est en partant de ce manque d'interprétation pour les modules premiers que Del~Mondo \textit{et al.} \cite{Del2009} tentent de raffiner l'arbre de décomposition modulaire, avec l'\textit{homogeneous decomposition}, introduisant deux nouveaux types de sommets pour les modules premiers. Est-il possible d'utiliser cette nouvelle structure dans le cadre du problème \pmogm ?

%version journal
Del~Mondo \textit{et al.} \cite{Del2009} refine the modular tree decomposition with the \textit{homogeneous decomposition}, introducing two new types of nodes for the prime modules. Can we use this new structure for the \pmogm problem?
%
%
%interpretation bio de gagneur et al. \cite{Gagneur:Krause:Bouwmeester:Casari:2004} (trouvable dans rusu et al. \cite{DelMondo:Eveillard:Rusu:2009}) (copi\'e-coll\'e)
%\begin{itemize}
% \item A parallel node is an alternative between its children: any of them successifully replaces the module in any operation involving the module.
% \item A serie node requires that \textit{all} its children combine together to replace the module in any operation involving the module.
% \item There is no reasonable biological interpretation for a prime node, since such nodes may correspond to graphs which are very different one from another, and which may be very large.
%\end{itemize}
%
%
%
%
%%Nous avons déjà remarqué que rechercher un module est sûrement trop contraint.
%
%version journal

A \textit{split}~\cite{Cunningham1982} generalize a module. 
% Un \textit{split} \cite{Cunningham1982} de $G=(V,E)$ est une partition de $V$ en deux sous-ensembles disjoints $V_1$ et $V_2$ telle que:
%\begin{itemize}
%	\item $|V_1| > 1$ et $|V_2| > 1$,
%	\item soit $\widetilde{V_1} = V_1 \cap N(V_2)$ (les sommets de $V_1$ voisins de ceux de $V_2$) et $\widetilde{V_2} = V_2 \cap N(V_1)$, alors pour tout $u \in \widetilde{V_1}$ et tout $v \in \widetilde{V_2}$, $\{u,v\} \in E$. 
%\end{itemize}
%Voir aussi Figure~\ref{fig:splits}. Cette notion est intéressante car elle généralise celle des modules (un module $\M$ est un \textit{split} où $V_1 \setminus \widetilde{V_1} = \emptyset$  et $\M= \widetilde{V_1}$). 
%
%On décompose récursivement un graphe en trouvant, s'il existe, un \textit{split} $\{V_1,V_2\}$ et en itérant sur les deux graphes $G[V_1 \cup \{\widetilde{v_2}\}]$ et $G[V_2 \cup \{\widetilde{v_1}\}]$, avec $\widetilde{v_1} \in \widetilde{V_1}$ et $\widetilde{v_2} \in \widetilde{V_2}$. Une telle décomposition est également possible en temps linéaire \cite{Dahlhaus2000}. Peut-on l'utiliser pour généraliser le problème \pmogm ?
It is also possible to decompose a graph in its splits in polynomial time~\cite{Dahlhaus2000}. Can we use splits to generalize the \pmogm problem?

\paragraph{Algorithmic questions}

It would also be interesting to know if \pmogm is $\wone$-hard if the parameter is the number of colors in the motif as for \pegm, or if using modularity change its complexity class.
%\begin{proposition}
% The GMM is $\wone$-hard if the parameter is the number of colors in $M$.
%\end{proposition}
%
%\begin{proof} Vrai ??
%
% ? (ressemble au multidimensional knapsack parametre par le nombre de dimensions)
%
%
%Attention, si multidim etait FPT, alors il existe un algo du type $poly(n).f(m)$. Or, si $m=1$, le probleme est NPC. Donc, il ne peut pas exister d'algo polynomial si m = 1. Or, notre algo FPT impliquerai un algo polynomial pour m=1. (attention, c'est pseudo pol...)
%(faux, car si fpt, alors $f(m).poly(n,N)$. Or, si $m=1$, il existe bien un algo $poly(n,N)$...
%\end{proof}

%The complexity of the algorithm of Proposition~\ref{prop:module brute force} is not satisfying for practical issues. We believe that this complexity can be improved by the use of multilinear monomials detection~\cite{Koutis2009}. 
%{\scriptsize Partial funding from DFG PABI BO1910/9-1 and ANR project BIRDS JCJC SIMI 2-2010}

%On remarque également qu'il est certainement possible d'obtenir un algorithme de complexité paramétrée pour le problème de comptage en adaptant l'algorithme de la Proposition~\ref{prop:mmogm}.

%\subsubsection*{Kernel}
%A un kernel poly ? (une solution pour l'union disjoincte peut être un ensemble de sommets isolés ?)
%regles ?

%\begin{acknowledgements}
%If you'd like to thank anyone, place your comments here
%and remove the percent signs.
%\end{acknowledgements}

%\bibliographystyle{abbrv}
%\bibliographystyle{spbasic}      % basic style, author-year citations
\bibliographystyle{spmpsci}      % mathematics and physical sciences
\bibliography{b_short}

\end{document}